\newtheorem{algorithm}{Algorithm}
\newtheorem{lemma}{Lemma}
\newcommand{\argmin}{\mathop{\mathrm{arg\,min}}}
\newcommand{\st}{\mathop{\mathrm{subject\,\,to}}}
\def\R{\mathbb{R}}
\def\sign{\mathrm{sign}}
\def\col{\mathrm{col}}
\def\row{\mathrm{row}}
\def\nul{\mathrm{null}}
\def\rank{\mathrm{rank}}
\def\nuli{\mathrm{nullity}}
\def\half{\frac{1}{2}}
\def\hbeta{\hat\beta}
\def\hu{\hat{u}}
\def\ty{\tilde{y}}
\def\tA{\widetilde{A}}
\def\tD{\widetilde{D}}
\def\tG{\widetilde{G}}
\def\tP{\widetilde{P}}
\def\tQ{\widetilde{Q}}
\def\tR{\widetilde{R}}
\def\cB{\mathcal{B}}
\def\cG{\mathcal{G}}
\def\wBox{{\color{white} \Box}}
\def\sw{\setlength{\arraycolsep}{1pt}}
\def\sh{\renewcommand{\arraystretch}{0.75}}
\title{Efficient Implementations of the Generalized
Lasso Dual Path Algorithm}
\author{
\parbox{2in}{\centering
Taylor B. Arnold \\
AT\&T Labs Research}
\hspace{0.5in}
\parbox{2in}{\centering
Ryan J. Tibshirani \\
Carnegie Mellon University}
}
\date{}
\begin{document}
\maketitle

\begin{abstract}
We consider efficient implementations of the generalized lasso dual
path algorithm of \citet{genlasso}.  We first describe a generic
approach that covers any penalty matrix $D$ and any (full column rank)
matrix $X$ of predictor variables.  We then describe fast
implementations for the special cases of
trend filtering problems, fused lasso problems, and sparse fused lasso
problems, both with $X=I$ and a general matrix $X$.  These specialized
implementations offer a considerable improvement over the generic
implementation, both in terms of numerical stability and efficiency of
the solution path computation.  These algorithms are all available for
use in the {\tt genlasso} R package, which can be found in the CRAN
repository.

Keywords: {\it generalized lasso, trend filtering, fused lasso,
path algorithm, QR decomposition, Laplacian linear systems}
\end{abstract}

\section{Introduction}
\label{sec:intro}

In this article, we study computation in the generalized lasso problem
\citep{genlasso}
\begin{equation}
\label{eq:genlasso}
\hbeta = \argmin_{\beta \in \R^p} \,
\half \|y-X\beta\|_2^2 + \lambda\|D\beta\|_1,
\end{equation}
where $y\in\R^n$ is an outcome vector, $X\in\R^{n\times p}$ is a
predictor matrix, $D\in\R^{m\times p}$ is a penalty
matrix, and $\lambda \geq 0$ is a regularization parameter.
The term ``generalized'' refers to the fact that problem
\eqref{eq:genlasso} reduces to the standard lasso problem
\citep{lasso,bp} when $D=I$, but yields different problems with
different choices of the penalty matrix $D$.  We will assume that $X$
has full column rank (i.e., $\rank(X)=p$), so as to ensure a unique
solution in \eqref{eq:genlasso} for all values of $\lambda$.

Our main contribution is to derive efficient implementations
of the generalized lasso dual path algorithm of \citet{genlasso}.
This algorithm computes the
solution \smash{$\hbeta(\lambda)$} in \eqref{eq:genlasso} over the
full range of regularization parameter values $\lambda \in
[0,\infty)$.
We present an efficient implementation for a general
penalty matrix $D$, as well as specialized, extra-efficient
implementations for two special classes of generalized lasso problems:
{\it fused lasso} and {\it trend filtering} problems.
The algorithms that we describe in this work
are all implemented in the {\tt genlasso} R package, freely
available on the CRAN repository \citep{cran}.

We note that the fused lasso and trend filtering problems are known,
well-established problems (early references for fused lasso are
\citet{fusion}, \citet{fuse}, and early works on trend filtering are
\citet{hightv}, \citet{l1tf}).  These problems are not original to the
generalized lasso framework, but the latter framework simply provides
a useful,  unifying perspective from which we can study them.  We give
a brief overview here; see the aforementioned references for more
discussion, or Section 2 of \citet{genlasso}, and also Section
\ref{sec:crime} of this paper, for examples and figures. 

In the first problem class, the fused lasso setting, we think of the
components of $\beta \in \R^p$ as corresponding to the nodes of a
given undirected graph $G$, with edge set $E \subseteq \{1,\ldots
p\}^2$. If $E$ has $m$ edges, enumerated $e_1,\ldots e_m$, then the
fused lasso penalty matrix $D$ is $m \times p$, with one row
for each edge in $E$.  In particular, if $e_\ell = (i,j)$, then the
$\ell$th row of $D$ is
\begin{equation*}
D_\ell = (0, \ldots \underset{\substack{\;\;\uparrow \\ \;\;i}}{-1},
\ldots \underset{\substack{\uparrow \\ j}}{1}, \ldots 0) \in \R^p,
\end{equation*}
i.e., $D_\ell$ contains all zeros except for a $-1$ and $1$ in the
$i$th and $j$th components (equivalent to this would be a $1$ and $-1$
in the $i$th and $j$th components). The fused lasso penalty term is
hence
\begin{equation*}
\|D\beta\|_1 = \sum_{e_\ell = (i,j) \in E} |\beta_i-\beta_j|.
\end{equation*}
The effect of this penalty in \eqref{eq:genlasso} is that many of the
terms \smash{$|\hbeta_i-\hbeta_j|$} will be zero at the solution
\smash{$\hbeta$}; in other words, the solution exhibits a piecewise
constant structure over connected subgraphs of $G$.
To relate the fused lasso penalty matrix to concepts from graph
theory, note that $D$ as described above is the
{\it oriented incidence matrix} of the undirected graph
$G$.
This means that $L=D^T D$ is the {\it Laplacian matrix} of $G$---a
realization that will be useful for our work in Section
\ref{sec:fusedlasso}. 

An important special case to mention is the {\it 1-dimensional fused
lasso}, in which the components of $\beta$ correspond to successive
positions on a 1-dimensional grid, so $G$ is the chain graph
(with edges $(i,i+1)$, $i=1,\ldots p-1$), and the penalty matrix is
\begin{equation}
\label{eq:d1d}
D = \left[\begin{array}{rrrrrr}
-1 & 1 & 0 & \ldots & 0 & 0 \\
0 & -1 & 1 & \ldots & 0 & 0 \\
\vdots & & & & & \\
0 & 0 & 0 & \ldots & -1 & 1
\end{array}\right].
\end{equation}
The solution \smash{$\hbeta$} is therefore piecewise constant across
the underlying positions.  (A clarifying note: the original work of
\citet{fusion}, \citet{fuse} considered only this 1-dimensional setup,
and the generalization to a graph setting came in later works. Also,
\citet{fuse} defined the fused lasso criterion with an additional
$\ell_1$ penalty on coefficients $\beta$ themselves; we now refer to
this as the {\it sparse fused lasso} problem.  It is not fundamentally
different from the version we consider here, with pure fusion, and can 
be handled by the described path algorithm; see Section
\ref{sec:lapsfl}.) 

The second problem class, trend filtering, also starts with the
assumption that the components of $\beta$
correspond to positions on an underlying 1d grid,
like the 1d fused lasso; but trend filtering more generally produces
a solution \smash{$\hbeta$} that bears the structure of a piecewise
$k$th degree polynomial function over the underlying positions, where
$k \geq 0$ is a given integer.  To accomplish this, the trend
filtering penalty matrix is taken to be $D=D^{(k+1)}$, the
$(p-k-1)\times p$ discrete difference operator of order $k+1$.
These discrete derivative operators can be defined recursively, by
letting $D^{(1)}$ be the $(p-1)\times p$ first difference matrix in
\eqref{eq:d1d}, and
\begin{equation*}
D^{(k+1)} = D^{(1)} D^{(k)} \;\;\; \text{for}\;\, k=1,2,3,\ldots.
\end{equation*}
(In the above, $D^{(1)}$ is the $(p-k-1)\times (p-k)$ version of the
first difference matrix.)  Note that the 1d fused lasso is exactly
a special case of trend filtering with $k=0$.
For a general order $k \geq 0$, the matrix $D^{(k+1)}$ is
banded with bandwidth $k+2$, and a straightforward calculation shows
that the penalty term in \eqref{eq:genlasso} can be written explicitly
as
\begin{equation*}
\|D^{(k+1)}\beta\|_1 = \sum_{i=1}^{p-k-1} \left|
\sum_{j=i}^{i+k+1} (-1)^{j-i} {k+1 \choose j-i} \beta_j \right|.
\end{equation*}
We refer the reader to \citet{trendfilter} for a study of trend
filtering as a signal estimation method (i.e., for $X=I$), where it is
shown to have desirable statistical properties in a nonparametric  
regression context.
 
In what follows, we describe the dual path algorithm of
\citet{genlasso}, and then begin discussing strategies for its
implementation. First, though, we briefly review other computational
approaches for the generalized lasso problem \eqref{eq:genlasso}.

\subsection{Related work}
\label{sec:related}

There are many algorithms, aside from the dual path algorithm
central to this paper, for solving the convex problem
\eqref{eq:genlasso} and its various special cases.  It will be helpful
to distinguish between algorithms that solve \eqref{eq:genlasso} at
fixed values of the tuning parameter $\lambda$, and algorithms 
that sweep out the entire path of solutions as a continuous function
of $\lambda$.  

In the former case, when the solution is
desired a fixed value of $\lambda$, a number of
more or less standard convex optimization techniques can be applied.
For arbitrary matrices $X,D$, problem \eqref{eq:genlasso} can be
recast as a quadratic program, so that, e.g., we may use the standard 
interior point methods common to quadratic  and conic programming
problems.  We can also use the alternating direction method of
multipliers (ADMM) for general $X,D$.  For certain instantiations of
$X,D$, there are 
faster, more specialized techniques.  For example, when $X=I$ and $D$
is the 1d fused lasso matrix, problem \eqref{eq:genlasso} can be
solved in linear time via a taut string method \citep{tautstring},
or dynamic programming \citep{fusedp}.  When $X=I$ and $D$ is the
fused lasso matrix over a graph, a clever parametric max flow
approach \citep{chambolle09} applies. When $X=I$ and $D$ is the trend  
filtering matrix, highly efficient and specialized interior 
point methods \citep{l1tf} or ADMM algorithms \citep{fasttf} are
available. Finally, when $X$ is an arbitrary matrix and $D$
falls into any one of the above categories, one can implement a
proximal gradient algorithm, with each proximal evaluation
utilizing one of the specialized techniques just described.

In terms of path algorithms, the literature is more sparse.  
For the lasso problem, the well-known least angle regression algorithm
of \citet{lars} computes the full solution path (see also
\citet{homotopy1,homotopy2}).  For fused lasso problems,
\citet{hoefling} describes a path algorithm based on max flow
subroutines, which efficiently tracks the path in the 
direction opposite to the one we consider (i.e., starts with
$\lambda=0$ and ends at $\lambda=\infty$).  
For the generalized lasso problem, \citet{zhou2013}
propose a path algorithm from the primal perspective; however, their
work assumes $D$ to have full row rank, which does not hold in
many cases of interest (such as the fused lasso over a graph with more
edges than nodes).  
The dual path algorithm of \citet{genlasso} has the
advantage that it 
operates in a single, unified framework that allows $D$ to be
completely general, but is also flexible enough to permit efficient
specialized versions when $D$ takes specific forms.  Given the 
magnitude of related work, we do not give detailed comparisons to
alternative methods, but instead focus on fast, stable
implementations of the generalized lasso dual path algorithm.

\subsection{The dual path algorithm}
\label{sec:dualpath}

We recall the details of the dual path algorithm for the
generalized lasso problem. We do not place any
assumptions on $D \in
\R^{m\times n}$, but we do assume that $X$ has full column rank, which
implies a unique solution in \eqref{eq:genlasso} for all $\lambda$.
As its name suggests, the dual path algorithm actually computes a
solution path of the equivalent {\it dual problem} of
\eqref{eq:genlasso}, instead of solving \eqref{eq:genlasso} directly.

\subsubsection{The signal approximator case, $X=I$}

It helps to first consider the
``signal approximator'' case, $X=I$.  In this case, for any
fixed value of $\lambda$, the dual of problem \eqref{eq:genlasso} is:
\begin{equation}
\label{eq:genlassodual1}
\hu \in \argmin_{u\in\R^m} \,
\half \|y-D^T u\|_2^2
\;\;\st\;\, \|u\|_\infty \leq \lambda,
\end{equation}
and the primal and dual solutions, \smash{$\hbeta$} and $\hu$, are
related by:
\begin{equation}
\label{eq:primaldual1}
\hbeta = y-D^T \hu.
\end{equation}
We note that, though the primal solution is unique, the dual
solution need not be unique (this is reflected by the element
notation in \eqref{eq:genlassodual1}).   

The path algorithm proposed
\citet{genlasso} computes a solution path $\hu(\lambda)$ of the dual
problem, beginning at $\lambda=\infty$ and progressing down to
$\lambda=0$; this gives the primal solution path
\smash{$\hbeta(\lambda)$}
using the transformation in \eqref{eq:primaldual1}.   At a high level,
the algorithm keeps track of the coordinates of the
computed dual solution $\hu(\lambda)$ that are equal to $\pm \lambda$,
i.e., that lie on the boundary of the constraint region
$[-\lambda,\lambda]^m$, and it determines critical
values of the regularization parameter, $\lambda_1 \geq \lambda_2
\geq \ldots$, at which coordinates of this solution hit or
leave the boundary.
We outline the algorithm below; in terms of notation, we
write $D_S$ to extract the rows of $D$ in $S \subseteq \{1,\ldots
m\}$, and we use $D_{-S}$ as shorthand for $D_{\{1,\ldots m\}
  \setminus S}$.

\begin{algorithm}[\textbf{Dual path algorithm for the generalized
    lasso, $X=I$}]
\label{alg:dualpath1}
\hfill\par
\smallskip
\smallskip
Given $y \in \R^n$ and $D \in \R^{m\times n}$.
\begin{enumerate}
\item Compute $\hu$, the minimum $\ell_2$ norm solution
of
\begin{equation*}
\min_{u\in\R^m} \, \|y-D^T u\|_2^2.
\end{equation*}

\item Compute the first hitting time $\lambda_1$, and the hitting
coordinate $i_1$.  Record the solution $\hu(\lambda)=\hu$ for
$\lambda \in [\lambda_1, \infty)$.  Initialize $\cB=\{i_1\}$,
$s=\sign(\hu_{i_1})$, and $k=1$.

\item While $\lambda_k>0$:
\begin{enumerate}
\item Compute $\hat{a}$ and \smash{$\hat{b}$}, the minimum $\ell_2$
norm solutions of
\begin{equation*}
\min_{a \in \R^{m-|\cB|}} \, \|y-D_{-\cB}^T a \|_2^2 \;\;\;\text{and}\;\;\;
\min_{b \in \R^{m-|\cB|}} \, \|D_\cB^T s - D_{-\cB}^T b \|_2^2,
\end{equation*}
respectively.

\item Compute the next hitting time and the next leaving time.  Let
$\lambda_{k+1}$ denote the larger of the two; if the hitting time is
larger, then add the hitting coordinate to $\cB$ and its sign to $s$,
otherwise remove the leaving coordinate from $\cB$ and its sign from
$s$.  Record the solution
\smash{$\hu(\lambda) = \hat{a}-\lambda \hat{b}$} for
$\lambda \in [\lambda_{k+1},\lambda_k]$, and update $k=k+1$.
\end{enumerate}
\end{enumerate}
\end{algorithm}

The main computational effort lies in Steps 1 and 3(a). In words:
starting with the set $\cB=\emptyset$, we repeatedly solve least
squares problems of the form $\min_x \|c-D_{-\cB}^T x\|_2^2$---which
is the same as solving linear systems $D_{-\cB}D_{-\cB}^T x = D_{-\cB}
c$---as elements are added to or deleted from $\cB$, that is,
$D_{-\cB}$ either loses or gains one row.
A caveat is that we always require the minimum $\ell_2$ norm
solution (but this is only an important distinction when the
solution is not unique).  Steps 2 and 3(b) are computationally
straightforward, as they utilize the results of Steps 1 or 3(a) in
a simple way; see Section 5 of \citet{genlasso} for specific details.

\subsubsection{The general $X$ case}

For a general $X$, with $\rank(X)=p$, the dual problem of
\eqref{eq:genlasso} can be written as:
\begin{equation}
\label{eq:genlassodual2}
\hu \in \argmin_{u\in\R^m} \,
\half \|XX^+y-(X^+)^T D^T u\|_2^2
\;\;\st\;\, \|u\|_\infty \leq \lambda,
\end{equation}
where $X^+ \in \R^{p\times n}$ denotes the Moore-Penrose pseudoinverse
of $X \in \R^{n\times p}$ (recall that for rectangular $X$, we take
$X^+=(X^T X)^+ X^T$), and the primal and dual solutions are related
by:
\begin{equation}
\label{eq:primaldual2}
X\hbeta = XX^+y- (X^+)^T D^T\hu.
\end{equation}
Though it may initially look more complicated,
the dual problem \eqref{eq:genlassodual2} is of the exact same form as
\eqref{eq:genlassodual1}, the dual in the signal approximator case,
but with a different outcome $\ty=XX^+ y$
and penalty matrix \smash{$\tD = DX^+$}.  Hence, modulo a
transformation of inputs, the same algorithm can be applied.

\begin{algorithm}[\textbf{Dual path algorithm for the generalized
    lasso, general $X$}]
\label{alg:dualpath2}
\hfill\par
\smallskip
\smallskip
Given $y \in \R^n$,
$D \in \R^{m\times p}$, and $X \in \R^{n\times  p}$ with
$\rank(X)=p$.
\begin{enumerate}
\item Compute $\ty=XX^+ y \in \R^n$ and
\smash{$\tD=DX^+ \in \R^{m\times n}$}.
\item Run Algorithm \ref{alg:dualpath1} on $\ty$ and \smash{$\tD$}.
\end{enumerate}
\end{algorithm}

If $X$ does not have full column rank (note that this is necessarily
the case when $p>n$), then a path following approach is still
possible, but is substantially more complicated---see \citet{genlasso}
for a discussion.   An easier fix (than deriving a new path algorithm)
is to simply add a term $\epsilon \|\beta\|_2^2$ to the
criterion in \eqref{eq:genlasso}, where $\epsilon$ is a small
constant.  This new criterion can be written in standard generalized
lasso form, with an augmented and full column rank predictor matrix,
and therefore we can apply Algorithm \ref{alg:dualpath2} to compute
the solution path.

\subsection{Implementation overview}
\label{sec:implementation}

We give a summary of the various implementations of Algorithm
\ref{alg:dualpath1} (and Algorithm \ref{alg:dualpath2}) presented in
this article.

\subsubsection{The signal approximator case, $X=I$}

As before, we first address the case
$X=I$. For an arbitrary penalty matrix $D$,
a somewhat naive implementation of Algorithm \ref{alg:dualpath1} would
just solve the sequence of least squares problems in Step 3(a)
independently, as the algorithm moves from one iteration to the next.
Denoting $r=m-|\cB|$, so
that $D_{-\cB}$ is an $r\times n$ matrix, each iteration here would
require $O(r^2 n)$ operations if $r\leq n$, or $O(rn^2)$
operations if $r>n$.
A smarter approach would be to compute a QR decomposition of $D^T$ or
$D$ (depending on the dimensions of $D$)
to solve the initial least squares problem in Step 1, and then
update this decomposition as $D_{-\cB}$ changes to solve the
subsequent problems in Step 3(a).  In this new strategy, each
iteration takes $O(rn)$ or $O(\max\{r^2,n^2\})$ operations (when
maintaining a QR decomposition of $D_{-\cB}^T$ or $D_{-\cB}$,
respectively), which improves upon the cost of the naive strategy by
essentially an order of magnitude.  In Section \ref{sec:generald} we
give the details of this more efficient QR-based implementation.

While the QR-based aproach is effective as a general tool, for
certain classes of problems it can be much better to take advantage of
the special structure of $D$. In Sections \ref{sec:trendfilter} and
\ref{sec:fusedlasso} we describe two such specialized
implementations, for the trend filtering and fused lasso problem
classes. Here the least squares problems in Algorithm
\ref{alg:dualpath1} reduce to solving banded linear systems (trend
filtering) or Laplacian linear systems (the fused lasso).  Since these
computations are much faster than those for generic dense linear
systems (i.e., the least squares problems given an arbitrary $D$), the
specialized implementations offer a considerable boost in efficiency.
Table \ref{tab:runtimes} provides a summary of the various
computational complexities (given per iteration).

\begin{table}[h]
\begin{center}
\begin{tabular}{|c|c|c|}
\hline
& $X=I$ & General $X$ \\
\hline
General $D$, $\rank(D)=m$ & $O(rn)$ & $O(rn)$ \\
\hline
General $D$, $\rank(D)<m$ & $O(\max\{r^2,n^2\})$ &
$O(\max\{r^2,n^2\})$ \\
\hline
Trend filtering & $O(r)$ & $O(r+nq^2)$ \\
\hline
Fused lasso* & $O(\max\{r,n\})$ & $O(\max\{r,n\}+nq^2)$ \\
\hline
\end{tabular}
\caption{\small\it Complexities of different
implementations of the dual path algorithm, designed to solve
different problems.  All complexities refer to a single
iteration of the algorithm.  Recall that we denote $r=m-|\cB|$, and
the complexity of an iteration is proportional to solving a linear
system in the $r\times r$ matrix $D_{-\cB} D_{-\cB}^T$. At the first
iteration, $\cB=\emptyset$, and so $r=m$; across iterations,
$\cB$ typically decreases in size by one (but not always---it can also
increase in size by one), until at some point $\cB=\{1,\ldots m\}$,
and so $r=0$.  For the complexities in the general $X$ case, we
write $q=\nuli(D_{-\cB})$ for the dimension of the null space of
$D_{-\cB}$, which is an unbiased estimate for the degrees of freedom
of the generalized lasso fit at the current iteration. Finally, in the
last row, the ``*'' marks the fact that the reported
complexities are based on not an empirical (rather than a formal)
understanding of the relevant linear system solver. Solutions
here are computed using a sparse Cholesky factorization of a Laplacian
matrix, whose runtime is not known to have a tight bound, but
empirically behaves linearly in the number of edges in the underlying
graph.}
\label{tab:runtimes}
\end{center}
\end{table}

\subsubsection{The general $X$ case}

Now we discuss the case of a general $X$ (having full column rank).
For a general penalty matrix $D$, the first step of
Algorithm \ref{alg:dualpath2} requires $O(np^2)$
operations to compute $X^+$, and then the QR-based implementation
outlined above can simply be applied to \smash{$\tD \in \R^{m\times
n}$}.  Note that, aside from the initial overhead of computing $X^+$,
the complexity per iteration remains the same (as in the signal
approximator case).

However, for the specialized implementations for trend filtering and
fused lasso problems, the adjustment for a general $X$ is not so
straightforward. Generally speaking, performing the transformation
\smash{$\tD=DX^+$} destroys any special structure present in the
penalty matrix, and hence the least squares problems in
Algorithm \ref{alg:dualpath1}, with \smash{$\tD$} in place of $D$, no
longer directly reduce to banded or Laplacian linear systems for trend
filtering or fused lasso problems, respectively.  Fortunately,
efficient, specialized implementations
for trend filtering and
fused lasso problems are still possible in the case of a general
$X$, as we show in Section \ref{sec:specialx}.
It is important to note that the implementations
here {\it do not need to compute an initial pseudoinverse of $X$},
and only ever require solving a full linear system in $X^T X$
at the very end of the path; this makes a big difference if early
termination of the path algorithm was of interest.  Again, see Table
\ref{tab:runtimes} for a list of per-iteration complexities of the
dual path algorithm for a general $X$, across various special cases.


It is worth noting a few more high-level points about our analysis
and implementation choices before we concentrate on the details in
Sections \ref{sec:generald} through \ref{sec:specialx}.
First, in general, the total number of steps $T$ taken by the dual
path algorithm is not precisely understood.
The path algorithm tracks $m$ dual coordinates as they enter the
boundary set, but a coordinate can leave and re-enter the boundary set
multiple times, which means that the total number of steps $T$ can
greatly exceed $m$.  The main exception is the 1d fused lasso problem
in the signal approximator case, $X=I$, where it is known that a dual
coordinate will never
leave the boundary set once entered, and so the algorithm always takes
exactly $T=m$ steps \citep{genlasso}.
  Beyond this special case, a general
upper bound is $T \leq 3^m$ (as no pair of boundary
set and signs $\cB,s$ can be revisited throughout iterations of the
path algorithm), but this bound is very far what is observed in
practice. Further, solutions of interest can often be
obtained by a partial run of the path algorithm (i.e., terminating the
algorithm early) since the
algorithm starts at the fully
regularized end ($\lambda=\infty$) and produces less and less
regularized solutions as it proceeds (as $\lambda$ decreases).  For
these reasons, we choose to focus on the complexity of each iteration
of the path algorithm, and not its total complexity, in our analysis.

A second point concerns the choice of solvers for the linear systems
encountered across
steps of the path algorithm.  Broadly speaking, there are two
types of solvers for linear systems: direct and indirect
solvers.  Direct solvers (typically based on matrix factorizations)
return an exact solution of a linear system (exact up to
computer rounding errors---i.e., on a perfect computational platform,
a direct method would return an exact solution).  Indirect solvers
(usually based on iteration) produce an approximate solution
to within a user-specified tolerance level
$\epsilon$ (and their runtime depends on $\epsilon$, e.g., via a
multiplicative factor like $\log(1/\epsilon)$).
Indirect solvers will generally scale to much larger problem sizes than
direct ones, and hence they may be preferable if one can tolerate
approximate solutions.  In the context of the dual path algorithm,
however, the quality of solutions of the linear systems at each step
can strongly influence the accuracy of the algorithm in future
steps, as the boundary set $\cB$ is grown incrementally across
iterations.  In other words,
relying on approximate solutions can be risky because approximation
errors can accumulate along the path, in the sense that the algorithm
can make false additions to the boundary set $\cB$ that cannot really
be undone in future steps.  We therefore stick to direct solvers
in all proposed implementations of the dual path algorithm, across the
various special problem cases.

After describing the implementation strategies for a general penalty
matrix $D$, trend filtering problems, and fused lasso problems
in Sections \ref{sec:generald} through \ref{sec:specialx}, the rest of
this paper is dedicated to example applications the path algorithm, in
Section \ref{sec:crime}, and an empirical evaluation of the various
implementations, in Section \ref{sec:timings}.

\section{QR-based implementation for a general $D$}
\label{sec:generald}

This section considers a general penalty matrix
$D \in \R^{m\times n}$.  We assume without a loss of generality
that $X=I$; recall that a general (full column rank)
matrix $X$ contributes an additional $O(np^2)$ operations for the
computation of $X^+$, but changes nothing else---see Algorithm
\ref{alg:dualpath2}.  Hence we focus on Algorithm
\ref{alg:dualpath1}, and our
strategy is to use a QR decomposition to solve the least squares problems
at each iteration, and update it efficiently as rows are removed from or added to
$D_{-\cB}$. Appendix \ref{app:qrls} reviews the QR decomposition and how it
can be used to compute minimum $\ell_2$ norm solutions of least
squares problems.
Appendices \ref{app:upfull} and \ref{app:updef} describe techniques
for efficiently updating the QR decomposition, after rows or columns
have been added or removed.  These techniques save essentially an
order of magnitude in computational work when compared to computing
the QR decomposition anew. All of the computational complexities cited
in the following sections are verified in these appendices (a word of
warning to the reader: the roles of $m$ and $n$ are not the same in
the appendices as they are here).

We present two strategies: one that computes and maintains a QR decomposition
of $D^T$, and another that does the same for $D$. The second strategy can
handle all penalty matrices $D \in \R^{m\times n}$, regardless
of the dimensions and rank. On the other hand, the first strategy only applies to
matrices $D$ for which $m \leq n$ and $\rank(D)=m$, but is more efficient (than
the second strategy) in this case. We call the first strategy
the ``wide strategy'', and the second the ``tall strategy''.
After describing these two strategies, we make comparisons in terms of
computational order.

\subsection{The wide strategy}
\label{sec:wide}

If $m\leq n$, then we first compute the QR decomposition $D^T = QR$,
where $Q \in \R^{n\times n}$ is orthogonal and $R\in\R^{m\times n}$ is
of the form
\begin{equation*}
R = \left[\begin{array}{c} R_1 \\ 0 \end{array}\right],
\end{equation*}
where the top block $R_1 \in \R^{m\times m}$ has all zeros below its diagonal.
Computing this decomposition takes $O(m^2n)$ operations.
If one or more of the diagonal elements of $R_1$ is zero, then $\rank(D)<m$;
in this case, we skip ahead to the tall strategy (covered in the next section).
Otherwise, $R_1$ has proper upper triangular form (all nonzero diagonal
elements), which means that $\rank(D)=m$, and we
proceed with the wide strategy, outlined below.

\begin{itemize}
\item{\it Step 1.}
We first compute the minimizer $\hu$ of $\|y-D^T u\|_2^2$ (note that since
$\rank(D)=m$, this minimizer is unique). Using the QR decomposition $D^T=QR$,
this can be done in $O(mn)$ operations (Appendix \ref{app:qrfull}).

\item{\it Step 3(a).} Now we compute the minimizers
$\hat{a}$ and \smash{$\hat{b}$} of the two least squares criterions
$\|y-D_{-\cB}^T a\|_2^2$ and $\|D_\cB^T s - D_{-\cB}^T b\|_2^2$,
respectively.
The set $\cB$ has changed by one element from the previous iteration
(thinking of the boundary set as being empty in the initial least
squares problem of Step 1). By construction, we have a decomposition of
$D_{-\cB'}^T$ for the old boundary set $\cB'$ (this is initially a
decomposition of $D^T$), and as $\cB$ and $\cB'$ differ by one element,
$D_{-\cB}^T$ and $D_{-\cB'}^T$ differ by one column. Hence we can update
the QR decomposition of $D_{-\cB'}^T$ to obtain one of $D_{-\cB}^T$,
in $O(r n)$ operations, where $r=m-|\cB|$ (Appendix
\ref{app:upfullcol}), and use this to solve the two least squares
problems, in another $O(rn)$ operations.
\end{itemize}

\subsection{The tall strategy}
\label{sec:tall}

The tall strategy is used when either $m>n$, or $m \leq n$ but $D$ is
row rank deficient (which would have been detected at the beginning of
the wide strategy). We begin by computing a QR decomposition of $D$
of the special form $DPG=QR$, where
$P\in\R^{n\times n}$ is a permutation matrix,
$G\in\R^{n\times n}$ is an orthogonal matrix of Givens rotations,
$Q\in\R^{m\times m}$ is orthogonal, and $R\in\R^{m\times n}$
decomposes as
\begin{equation*}
R = \left[\begin{array}{cc} 0 & R_1 \\ 0 & 0\end{array}\right].
\end{equation*}
Here $R_1 \in \R^{k\times k}$, and $k=\rank(D)$. This special QR
decomposition, which we refer to as the {\it rotated QR decomposition},
can be computed in $O(mnk)$ operations (Appendix \ref{app:mlst}).  The
steps taken by the tall strategy are as follows.

\begin{itemize}
\item{\it Step 1.} We compute the minimum $\ell_2$ norm minimizer
$\hu$ of $\|y-D^T u\|_2^2$, exploiting the special form of rotated QR
decomposition $DPG=QR$ (more precisely, the special form of the $R$
factor). This requires $O(n\cdot\max\{m,n\})$ operations
(Appendix \ref{app:mlst}).

\item{\it Step 3(a).} Now we seek the minimum $\ell_2$ norm
minimizers $\hat{a}$ and \smash{$\hat{b}$} of
$\|y-D_{-\cB}^T a\|_2^2$, respectively
$\|D_\cB^T s - D_{-\cB}^T b\|_2^2$.
We have a rotated QR decomposition of $D_{-\cB'}$, where
$\cB'$ is the boundary set in the previous iteration (thought of
as $\cB'=\emptyset$ in Step 1, so initially this decomposition is
simply $DPG=QR$). As the current boundary set $\cB$ and the old
boundary set $\cB'$ differ by one element, $D_{-\cB}$ and $D_{-\cB'}$
differ by one row, and we can update the rotated QR decomposition of
$D_{-\cB}$ to form a rotated QR decomposition of $D_{-\cB}$, in
$O(\max\{r^2,n^2\})$ operations, for $r=m-|\cB|$ (Appendix
\ref{app:updefcol}). Computing the appropriate minimum $\ell_2$ norm
solutions then takes $O(n\cdot\max\{r,n\})$ operations.
\end{itemize}

\subsection{Computational complexity comparisons}
\label{sec:compcubed}

In the wide strategy, the initial work requires $O(m^2 n)$ operations,
and each subsequent iteration $O(rn)$ operations. Meanwhile,
for the same problems, the naive strategy (which, recall, simply
solves all least squares problems encountered in Algorithm
\ref{alg:dualpath1} separately) performs $O(r^2n)$ operations
per iteration, which is an order of magnitude larger.

The comparison for
the tall strategy is similar, but strictly speaking not quite as favorable.
The initial work for the tall strategy requires
$O(mn\cdot\min\{m,n\})$ operations, and subsequent iterations require
$O(\max\{r^2,n^2\})$ operations.
The naive strategy uses $O(rn\cdot\min\{r,n\})$ operations per
iteration, which is an order of magnitude larger if $r = \Theta(n)$, but
not if $r$ and $n$ are of drastically different sizes. E.g.,
near the end of the path (where $r=m-|\cB|$ is quite small compared to
$n$),
iterations of the tall strategy can actually be less efficient than
the naive implementation. A simple fix is to switch over to the naive
strategy when $r$ becomes small enough. In practice, the start
of the path is usually of primary interest, and the tall strategy is
much more efficient than the naive one.

In summary, if $T$ denotes the total number of iterations taken by the
algorithm, then the total complexity of the QR-based implementation
described in this section is
\begin{equation*}
\begin{array}{ll}
O(m^2 n + T mn)
&\;\;\;\text{if}\;\, m\leq n \;\,\text{and}\;\, \rank(D)=m,
\smallskip \\
O(m^2 n + T n^2)
&\;\;\;\text{if}\;\, m\leq n \;\,\text{and}\;\, \rank(D)<m,
\smallskip \\
O(m n^2 + T m^2)
&\;\;\;\text{if}\;\, m> n.
\end{array}
\end{equation*}
We remark that work of \citet{genlasso} alluded to the
implementation described in this section, but did not give any
details. This latter work also reported a computational complexity
for such an implementation, but contained a typo, in that it
essentially mixes up the complexities for the cases $m\leq n$
and $m>n$.

\section{Specialized implementation for trend filtering, $X=I$}
\label{sec:trendfilter}

We describe a specialized implementation for trend
filtering.  Recall that for such a class of problems, we have
$D=D^{(k+1)}$, the $(p-k-1) \times p$ discrete derivative operator of
order $k+1$, for some fixed integer $k \geq 0$.  These operators are
defined as
\begin{align}
\label{eq:d1}
D^{(1)} &= \left[\begin{array}{rrrrrr}
-1 & 1 & 0 & \ldots & 0 & 0 \\
0 & -1 & 1 & \ldots & 0 & 0 \\
\vdots & & & & & \\
0 & 0 & 0 & \ldots & -1 & 1
\end{array}\right], \\
\label{eq:dk+1}
D^{(k+1)} &= D^{(1)} \cdot D^{(k)}
\;\;\; \text{for} \;\, k=1,2,3,\ldots.
\end{align}
In the signal approximator case, $X=I$, trend filtering can be viewed
as a nonparametric regression estimator, producing piecewise
polynomial fits of a prespecified order $k\geq 0$, and having
favorable adaptivity properties \citep{trendfilter}.  We focus on the
$X=I$ case here, and argue that trend filtering estimates
can be computed quickly via the dual path algorithm. The case of a
general $X$ requires a more sophisticated implementation and is
handled in Section \ref{sec:specialx}.

The analysis for trend filtering is actually quite straightforward: the
key point is that discrete difference operators
as defined in \eqref{eq:d1}, \eqref{eq:dk+1} are banded matrices with
full row rank.  In particular, $D^{(k+1)}$ has bandwidth $k+2$, and
this makes $D^{(k+1)} (D^{(k+1)})^T$ an invertible $(n-k-1) \times
(n-k-1)$ banded matrix of bandwidth $2k+3$, so we can solve the
initial least squares problem in Step 1 of Algorithm
\ref{alg:dualpath1}, i.e., solve the banded linear system
\begin{equation*}
D^{(k+1)}(D^{(k+1)})^T u = D^{(k+1)}y,
\end{equation*}
in $O(nk^2)$ operations, using a banded Cholesky decomposition of
$D^{(k+1)}(D^{(k+1)})^T$ (see Section 4.3 of \citet{gvl}).
Further, for an arbitrary boundary set
$\cB \subseteq \{1,\ldots n-k-1\}$, the matrix
\smash{$D^{(k+1)}_{-\cB} (D^{(k+1)}_{-\cB})^T$} is an $r \times r$
invertible matrix with bandwidth $2k+3$, where $r=n-k-1-|\cB|$,
and hence the two least squares problems in Step 3(a) of Algorithm
\ref{alg:dualpath1}, i.e., the two linear systems
\begin{equation*}
D^{(k+1)}_{-\cB}(D^{(k+1)}_{-\cB})^T a = D^{(k+1)}_{-\cB} y
\;\;\;\text{and}\;\;\;
D^{(k+1)}_{-\cB}(D^{(k+1)}_{-\cB})^T b = D^{(k+1)}_{-\cB}
(D^{(k+1)}_\cB)^T s,
\end{equation*}
can be solved in $O(rk^2)$ operations.  Since $k$ is a constant (it
is given by the order of the desired piecewise polynomial to be fit),
we see that each iteration in this implementation of the dual path
algorithm requires $O(r)$ operations, i.e., linear time in the number
of interior (non-boundary) coordinates, as listed in Table
\ref{tab:runtimes}.

The banded Cholesky decomposition of
\smash{$D_{-\cB}^{(k+1)}(D_{-\cB}^{(k+1)})^T$} provides a
very fast way of solving the above linear systems, both in terms of
its theoretical complexity and practical performance.  Yet, we
have found that solving the linear systems (i.e., the corresponding
least squares problems) with a sparse QR
decomposition of \smash{$(D_{-\cB}^{(k+1)})^T$} is essentially just
as fast in practice, even though this approach does not yield a
competitive worst-case complexity (since \smash{$D_{-\cB}^{(k+1)}$}
itself is not necessarily banded).  Importantly, the QR approach
delivers solutions with better numerical accuracy, due to the fact
that it operates on \smash{$D_{-\cB}^{(k+1)}$} directly, rather than
\smash{$D_{-\cB}^{(k+1)} (D_{-\cB}^{(k+1)})^T$}, whose condition number
is the square of that of \smash{$D_{-\cB}^{(k+1)}$} (see Section
5.3.8 of \citet{gvl}).  For this reason, it can be preferable to use the
sparse QR decomposition in practical implementations; this is the
strategy taken by R package {\tt genlasso}, which uses a particular
sparse QR algorithm of \citet{sparseqr}.

We remark that neither of the banded Cholesky nor sparse QR
approaches proposed here utilize information between the linear
systems across iterations, i.e., we do not maintain a single matrix
decomposition and update it at every iteration.  A successful updating
scheme of this sort would only add to the efficiency of the (already
highly efficient) proposals above.  But it is important to
mention that, in general, updating a sparse matrix decomposition
demands great care; standard updating rules intended for dense matrix
decompositions (e.g., as described in
Appendix \ref{app:upfull} for the QR decomposition) do not work
well in combination with sparse matrix decompositions, since they are
typically based on operations (e.g., Givens rotations) that can create
``fill-in''---the unwanted transformation of zero elements to nonzero
elements in factors of the decomposition.  Investigating
sparsity-maintaining update schemes is a topic for future work.

\section{Specialized implementation for the fused lasso, $X=I$}
\label{sec:fusedlasso}

This section derives a specialized implementation for fused lasso
problems, where the components of $\beta \in \R^p$
correspond to nodes on some underlying graph $G$, with undirected
edge set $E \subseteq \{1,\ldots p\}^2$.  If $E$ has $m$ edges,
written as $E=\{e_1,\ldots e_m\}$, then the fused lasso penalty matrix
$D$ has dimension $m\times p$.  Specifically, if the $\ell$th edge is
$e_\ell=(i,j)$, then recall that the $\ell$th row of $D$ is given by
\begin{equation*}
D_{\ell k} = \begin{cases}
-1 & k=i \\
1 & k=j \\
0 & \text{otherwise}
\end{cases},
\;\;\; k=1,\ldots p.
\end{equation*}
(In the above, the signs are arbitrary; we could have just as well
written $D_{\ell i}=1$ and $D_{\ell j}=-1$.) In graph theory, the
matrix $D$ is known as the oriented incidence matrix of the
undirected graph $G$.  For simplification in what follows, we will
assume that $X=I$; Section \ref{sec:specialx} relaxes this assumption,
but uses a more complex implementation plan.

As we have seen, Steps 1 and 3(a) of Algorithm
\ref{alg:dualpath1} reduce to solving to linear systems of the
form $DD^T x = Dc$ and \smash{$D_{-\cB}D_{-\cB}^Tx = D_{-\cB} c$},
respectively.
With $D$ the oriented incidence matrix of a
graph, the matrices $DD^T$ and \smash{$D_{-\cB}D_{-\cB}^T$} are highly
sparse, so one might guess that it is easy to execute
such steps efficiently.  A substantial complication, however, is that
we require the minimum $\ell_2$ norm solutions of
these generically underdetermined linear systems (note, e.g., that
$DD^T$ is rank deficient when the number of edges $m$ in the
underlying graph exceeds the number of nodes $n$, and an analogous
story holds for $D_{-\cB}D_{-\cB}^T$).  For a sparse
underdetermined linear system, it is typically
possible to find an arbitrary solution---\citet{gvl} call this a basic
solution---in an efficient manner, but computing the solution with the
minimum $\ell_2$ norm is generally much more difficult.

The main insight that we contribute in this section is a
strategy for obtaining the minimum $\ell_2$ norm solution of
\begin{equation}
\label{eq:ddt}
DD^T x = Dc
\end{equation}
from a basic solution of
\begin{equation}
\label{eq:dtd}
D^T D z = d,
\end{equation}
for some $d$.
The same strategy applies to the linear problems in future iterations
with $D_{-\cB}$ taking the place of $D$.  In fact, our proposed
strategy does not place any assumptions on $D$; its only real
constraint is that right-hand side vector $d$ in
\eqref{eq:dtd} is defined by a projection onto $\nul(D)$, the
null space of $D$, so this projection operator must be readily
computable in order for the overall strategy to be effective.
Fortunately, this is the case for fused lasso problems, as the
projections onto $\nul(D)$ and $\nul(D_{-\cB})$ can be done in
closed-form, via a simple averaging calculation.

Next, we precisely describe the relationship between the minimum
$\ell_2$ norm solution of \eqref{eq:ddt} and solutions of
\eqref{eq:dtd}.  This leads to alternate expressions for the
quantities \smash{$\hu$} and \smash{$\hat{a},\hat{b}$} in Steps 1 and
3(a) of the dual path algorithm, for a general
matrix $D$.  By following such alternate representations, we then
derive a specialized implementation for fused lasso problems.

\subsection{Alternative form for Steps 1 and 3(a) in Algorithm
  \ref{alg:dualpath1}}
\label{sec:alg1alt}

We present a simple lemma, relating the solutions of
\eqref{eq:ddt} and \eqref{eq:dtd}.

\begin{lemma}
\label{lem:dtd}
For any matrix $D$, the minimum $\ell_2$ norm solution $x^*$ of the
linear system \eqref{eq:ddt} is given by $x^* = Dz$,
where $z$ is any solution of the linear system \eqref{eq:dtd}, and
$d=P_{\row(D)} c = (I-P_{\nul(D)})c$.
\end{lemma}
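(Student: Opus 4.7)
The plan is to verify two things: that $x^\star = Dz$ actually solves the system $DD^T x = Dc$, and that it is the minimum $\ell_2$ norm solution among all solutions. Before either step, I would note that a solution to $D^T D z = d$ exists because $d = P_{\row(D)} c \in \row(D) = \col(D^T) = \col(D^T D)$, the last equality following from the standard identity $\nul(D^T D) = \nul(D)$ (so that $\col(D^T D) = \nul(D)^\perp = \row(D)$). I would also observe that even though $z$ need not be unique, the quantity $Dz$ is: if $z_1, z_2$ both solve $D^T D z = d$, then $z_1 - z_2 \in \nul(D^T D) = \nul(D)$, so $D z_1 = D z_2$. This justifies the phrasing ``any solution'' in the statement.

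For the first verification, I would compute directly:
\begin{equation*}
DD^T (Dz) = D (D^T D z) = D d = D \bigl( I - P_{\nul(D)} \bigr) c = D c - D P_{\nul(D)} c = D c,
\end{equation*}
where the last step uses that $P_{\nul(D)} c \in \nul(D)$, so it is annihilated by $D$. Hence $x^\star = Dz$ is a solution.

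For minimality, the key algebraic fact is that the minimum $\ell_2$ norm solution of a consistent symmetric linear system $A x = b$ (with $A = DD^T$ here) is the unique solution lying in $\nul(A)^\perp$. Using $\nul(DD^T) = \nul(D^T)$, we have $\nul(DD^T)^\perp = \col(D)$. Since $x^\star = Dz \in \col(D)$ by construction, it is orthogonal to the null space of $DD^T$, and is therefore the minimum norm solution.

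I do not expect any serious obstacles: the argument is essentially bookkeeping among the four fundamental subspaces of $D$, with the only slightly delicate point being the identity $\nul(D^T D) = \nul(D)$, which is standard (one inclusion is trivial, the other follows from $\|Dv\|_2^2 = v^T D^T D v$). The cleanest presentation is probably the three-part structure above: (i) existence of $z$ and well-definedness of $Dz$, (ii) direct substitution showing $Dz$ satisfies the equation, (iii) the range-space argument showing $Dz \in \col(D) = \nul(DD^T)^\perp$ gives minimum $\ell_2$ norm.
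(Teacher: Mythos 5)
Your proof is correct, and it takes a somewhat different route from the paper's. The paper proves the lemma in one stroke of pseudoinverse algebra: it writes the minimum $\ell_2$ norm solution as $x^* = (D^T)^+ c = (D^+)^T c = D (D^T D)^+ c$, recognizes $z^* = (D^T D)^+ c$ as the minimum-norm solution of $D^T D z = d$, and then, exactly as you do, notes that any other solution $z$ differs from $z^*$ by an element of $\nul(D)$ and hence gives the same $Dz$. The identity $(D^+)^T = D(D^T D)^+$ packages both of your verification steps at once: it simultaneously certifies that $Dz$ solves $DD^Tx = Dc$ and that it has minimum norm. Your version unpacks this into two elementary checks --- direct substitution to confirm $DD^T(Dz) = Dc$, and the observation that $Dz \in \col(D) = \nul(DD^T)^\perp$ to certify minimality --- using only the identity $\nul(D^T D) = \nul(D)$ and the standard characterization of the minimum-norm solution of a consistent system as the unique solution in $\nul(A)^\perp$. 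What your approach buys is self-containedness and transparency (no pseudoinverse identities need to be trusted, and you explicitly address consistency of the $D^TD$ system and well-definedness of $Dz$, which the paper leaves implicit); what the paper's buys is brevity. Both are complete and correct.
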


\begin{proof}
We can express the minimum $\ell_2$ norm solution
of \eqref{eq:ddt} as
\begin{equation*}
x^* = (D^T)^+c = (D^+)^T c = D (D^T D)^+ c,
\end{equation*}
using the fact that pseudoinverse and transpose operations
commute.  Now $z^* = (D^T D)^+c$ is the minimum $\ell_2$
norm solution of the linear system \eqref{eq:dtd}, provided that
$d=P_{\row(D)} c$.   Hence $x^* = Dz^*$.  But any
solution $z$ of \eqref{eq:dtd} has the form $z=z^* + \eta$, where
$\eta \in \nul(D)$, and therefore also $Dz = Dz^* + D\eta = x^*$.
\end{proof}

As a result, we can now reexpress the computation of $\hu$ and
\smash{$\hat{a},\hat{b}$} in Steps 1 and 3(a), respectively, of
Algorithm \ref{alg:dualpath1} as follows.
\begin{itemize}
\item {\it Step 1.}  Compute $v=(I-P_{\nul(D)})y$, solve the linear
  system $D^T D z = v$, and set $\hu = Dz$.
\item {\it Step 3(a).}  Compute \smash{$v=(I-P_{\nul(D_{-\cB})}) y$}
and \smash{$w = (I-P_{\nul(D_{-\cB})}) D_\cB^T s$}, solve the linear
systems \smash{$D_{-\cB}^T D_{-\cB} z = v$} and
\smash{$D_{-\cB}^T D_{-\cB} x = w$}, and
then set \smash{$\hat{a} = D_{-\cB} z$} and
\smash{$\hat{b} = D_{-\cB} x$}.
\end{itemize}
For an arbitrary $D$, using these alternate forms of the steps does
not necessarily provide a computational advantage over our existing
approach in Section \ref{sec:tall}.  For one, at each step we must
compute a projection onto $\nul(D)$ or $\nul(D_{-\cB})$, which is
generically just as difficult as maintaining a (rotated) QR
decomposition to compute the minimum $\ell_2$ norm solution of a
linear system in $DD^T$ or \smash{$D_{-\cB}D_{-\cB}^T$} (as covered
in Section \ref{sec:tall}).  A second point is that $D$ must be sparse
in order for there to be a genuine difference between computing basic
solutions and minimum $\ell_2$ norm solutions of linear systems
involving $D$.  However, in special
cases, e.g., the fused lasso case, working from the alternate forms of
Steps 1 and 3(a) given above can make a big difference in terms of
efficiency.

\subsection{Laplacian-based implementation for fused lasso problems}
\label{sec:lapfl}

The alternate forms of Steps 1 and 3(a) given in the previous
section have particularly nice translations for fused
lasso problems, with $D \subseteq \R^{m\times n}$ being the oriented
incidence matrix of a graph $G$.  In this case,
projections onto $\nul(D)$ and $\nul(D_{-\cB})$, as well as basic
solutions of linear systems in $D^T D$ and \smash{$D_{-\cB}^T
D_{-\cB}$}, can both be computed efficiently.

\subsubsection{Null space of the oriented incidence matrix}

We address the null space computations first.  It is not hard to
see that here the null space of $D$ is spanned by the indicators of
connected components $C_1,\ldots C_r$ of the graph $G$, i.e.,
\begin{equation*}
\nul(D) = \mathrm{span}\{1_{C_1},\ldots 1_{C_r}\},
\end{equation*}
where each $1_{C_j} \in \R^n$, and has components
\begin{equation*}
(1_{C_j})_i = \begin{cases}
1 & i \in C_j \\
0 & \text{otherwise}
\end{cases},
\;\;\; i=1,\ldots n.
\end{equation*}
Hence, projection onto $\nul(D)$ is simple and efficient, and
is given by componentwise averaging,
\begin{equation*}
(P_{\nul(D)} x)_i = \frac{1}{|C_j|} \sum_{\ell \in C_j} x_\ell
\;\;\;\text{where}\;\, C_j \ni i, \;\;\; \text{for each}\;\,
i=1,\ldots n.
\end{equation*}
For an arbitrary subset $\cB$ of $\{1,\ldots m\}$, note that
$D_{-\cB}$ is the oriented incidence matrix of the graph $G_{-\cB}$,
which denotes the graph $G$ after we delete the edges corresponding
to $\cB$ (in other words, $G_{-\cB}$ is the graph with nodes
$\{1,\ldots n\}$ and edges $\{e_\ell, \, \ell \notin \cB\}$).  Therefore
the same logic as above applies to projection onto $\nul(D_{-\cB})$:
it is given by componentwise averaging within the connected
components of $G_{-\cB}$,
\begin{equation*}
(P_{\nul(D_{-\cB})} x)_i = \frac{1}{|C_j|} \sum_{\ell \in C_j} x_\ell
\;\;\;\text{where}\;\, C_j \ni i, \;\;\; \text{for each}\;\,
i=1,\ldots n,
\end{equation*}
and $C_j$ now denotes the $j$th connected component of $G_{-\cB}$.

\subsubsection{Solving Laplacian linear systems}
\label{sec:laplin}

Now we discuss computing basic solutions of linear systems in $D^T D$
or \smash{$D_{-\cB}^T D_{-\cB}$}.  As $D$ is the oriented incidence
matrix of $G$, this makes $D^T D$ the Laplacian matrix of $G$;
similarly \smash{$D_{-\cB}^T D_{-\cB}$} is the Laplacian matrix of the
graph $G_{-\cB}$.   The Laplacian linear system is a well-studied
topic in computer science; see, e.g.,
\citet{lapsol} for a nice review paper.  In principle, any fast solver
can be used for the Laplacian linear systems in Steps 1 and 3(a) of
the path algorithm, as presented in Section \ref{sec:alg1alt}.
However, in practice, using indirect or iterative solvers (which
return approximate solutions, according to a user-specified tolerance
level for approximation) for the linear systems at each step can
cause practical issues with the path algorithm, as explained in the
introduction.  For the current setting, this precludes the use of the
extremely fast indirect algorithms for Laplacian linear systems that
have been recently developed by the theoretical computer science
community (again see \citet{lapsol}, and references therein).  We
focus instead on a simple direct solver.

Let $L$ denote the Laplacian matrix of an arbitrary graph.  If the
graph has $r$ connected components, then (modulo a reordering of its
rows and columns) $L$ can be expressed as
\begin{equation}
\label{eq:ldecomp}
L = \left[\begin{array}{cccc}
L_1 & 0 & \ldots & 0 \\
0 & L_2 & \ldots & 0 \\
\vdots & & & \\
0 & 0 & \ldots & L_r
\end{array}\right],
\end{equation}
i.e., a block diagonal matrix with $r$ blocks.  Therefore, the
Laplacian linear system $Lx = b$ reduces to solving $r$ separate
systems $L_j x_j = b_j$ (here we have decomposed $b=(b_1,\ldots
b_r)$ according to the same block structure), and then concatenating
$x=(x_1,\ldots x_r)$ to recover the original solution.

Note that each matrix $L_j$, $j=1,\ldots r$ is the Laplacian matrix
of a fully connected subgraph; this means that the null space of
$L_j$ is exactly 1-dimensional (it is spanned by the vector of all
1s), and that the linear system $L_jx_j = b_j$ is underdetermined.
The following lemma provides a remedy.

\begin{lemma}
\label{lem:laplin}
Let $L$ be the Laplacian matrix of a connected graph with $n$ nodes.
Write $L$ as
\begin{equation*}
L = \left[\begin{array}{cc}
A & c \\
c^T & d
\end{array}\right],
\end{equation*}
where $A \in \R^{(n-1)\times (n-1)}$, $c\in\R^{n-1}$, and $d\in\R$.
Then for any $b\in\col(L)\subseteq\R^n$, the Laplacian
linear equation
\begin{equation}
\label{eq:llin}
Lx = b
\end{equation}
is solved by $x=(x_{-n},0)$, where $x_{-n} \in \R^{n-1}$ is the
unique solution of
\begin{equation}
\label{eq:alin}
Ax_{-n} = b_{-n},
\end{equation}
with $b_{-n}\in\R^{n-1}$ containing the first $n-1$ components of $b$,
or in other words, $x_{-n}=A^{-1}b_{-n}$.
\end{lemma}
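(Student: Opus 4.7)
The plan is to verify directly that $x=(x_{-n},0)$ solves $Lx=b$ after first establishing that $A$ is invertible so that $x_{-n}=A^{-1}b_{-n}$ is well-defined. The main inputs we need to exploit are (i) $L$ is a symmetric positive semidefinite Laplacian of a \emph{connected} graph, so $\nul(L)=\mathrm{span}\{\mathbf{1}\}$ and $\col(L)=\nul(L)^\perp=\mathbf{1}^\perp$, and (ii) the row-sum identity $L\mathbf{1}=0$, which applied to the block form of $L$ gives $A\mathbf{1}_{n-1}+c=0$, i.e., $c=-A\mathbf{1}_{n-1}$.

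First I would show $A\succ 0$, hence invertible. Given any $y\in\R^{n-1}$ with $Ay=0$, extend to $\tilde y=(y,0)\in\R^n$; then $\tilde y^{T}L\tilde y=y^{T}Ay=0$, and since $L\succeq 0$, this forces $\tilde y\in\nul(L)=\mathrm{span}\{\mathbf{1}\}$. But $\tilde y$ has a zero in its $n$th component, so $\tilde y=0$, hence $y=0$. This establishes invertibility of $A$ and makes \eqref{eq:alin} uniquely solvable.

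Next, I would just compute $Lx$ for $x=(x_{-n},0)$ using the block form:
\begin{equation*}
Lx=\begin{bmatrix} A & c \\ c^{T} & d \end{bmatrix}\begin{bmatrix} x_{-n} \\ 0 \end{bmatrix}
=\begin{bmatrix} Ax_{-n} \\ c^{T}x_{-n} \end{bmatrix}
=\begin{bmatrix} b_{-n} \\ c^{T}x_{-n} \end{bmatrix}.
\end{equation*}
So it remains to check that the last coordinate matches, i.e., $c^{T}x_{-n}=b_n$.

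The one slightly non-routine step is this last matching. I would combine the two identities set up above: $c=-A\mathbf{1}_{n-1}$ gives
\begin{equation*}
c^{T}x_{-n}=-\mathbf{1}_{n-1}^{T}Ax_{-n}=-\mathbf{1}_{n-1}^{T}b_{-n},
\end{equation*}
while $b\in\col(L)=\mathbf{1}^\perp$ gives $\mathbf{1}^{T}b=0$, i.e., $b_n=-\mathbf{1}_{n-1}^{T}b_{-n}$. The two right-hand sides agree, so $c^{T}x_{-n}=b_n$, completing the proof. There is no real obstacle beyond noticing that these two identities are dual to one another: the row-sum condition on $L$ on one side, and the orthogonality of $b$ to $\mathbf{1}$ (guaranteed by $b\in\col(L)$) on the other.
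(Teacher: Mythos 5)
Your proof is correct, but it takes a genuinely different route from the paper's. The paper works through the column space: it shows that the first $n-1$ columns of $L$ have rank $n-1$ (by reducing to the claim that the first $n-1$ columns of the oriented incidence matrix $D$ have trivial null space, proved by propagating zeros along edges of the connected graph), concludes that these columns span $\col(L)$ so a unique $x_{-n}$ solving the full overdetermined system $[A;\,c^T]x_{-n}=b$ exists, and then restricts to the first $n-1$ equations. It never checks the last equation $c^T x_{-n}=b_n$ directly, since consistency is guaranteed by the column-space argument. You instead verify the solution coordinate-by-coordinate: you get invertibility of $A$ from the quadratic form ($y^TAy=\tilde y^T L\tilde y=0$ forces $\tilde y\in\mathrm{span}\{\mathbf 1\}$, hence $\tilde y=0$ because its last entry vanishes), and you close the last coordinate by pairing the row-sum identity $c=-A\mathbf 1_{n-1}$ against the orthogonality $\mathbf 1^Tb=0$ implied by $b\in\col(L)$. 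Your argument leans on symmetry and positive semidefiniteness rather than the incidence-matrix combinatorics, which makes it both a bit cleaner (it handles the invertibility of $A$ more explicitly than the paper does) and more general: it applies verbatim to any symmetric PSD matrix whose null space is spanned by $\mathbf 1$, e.g., weighted graph Laplacians. The paper's combinatorial rank argument, by contrast, stays closer to the graph structure that the surrounding implementation actually exploits. Both proofs rest on the same key fact that connectivity makes $\nul(L)$ one-dimensional.
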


\noindent
{\it Remark.} The ordering of the $n$ nodes in the graph does not
matter. Therefore, to solve $Lx=b$, we can actually consider the
submatrix $A\in\R^{(n-1)\times(n-1)}$ formed by excluding the $i$th
row and column from $L$, and solve the subsystem $Ax_{-i}=b_{-i}$,
for any $i \in \{1,\ldots n\}$.

\begin{proof}
Since the graph is connected, the null space of $L$
is 1-dimensional, and spanned by $(1,\ldots 1)\in\R^n$.
Hence $\rank(L)=n-1$, and the rank of the first $n-1$ columns of $L$
is at most $n-1$,
\begin{equation*}
\rank\left(\left[\begin{array}{c} A \\ c^T \end{array}\right]\right)
\leq n-1.
\end{equation*}
Assume that
\begin{equation}
\label{eq:rank}
\rank\left(\left[\begin{array}{c} A \\ c^T \end{array}\right]\right)
= n-1.
\end{equation}
Then $L$ and its first $n-1$ columns have the same image, so given any
$b$ in this image, there must exist a solution $x_{-n} \in \R^{n-1}$ in
\begin{equation}
\label{eq:im}
\left[\begin{array}{c} A \\ c^T \end{array}\right] x_{-n} = b,
\end{equation}
which yields a solution of $L
x=b$ with $x=(x_{-n},0)$. Moreover,
the solution $x_{-n}$ of \eqref{eq:im} is unique (by \eqref{eq:rank}), and
to find it we can restrict our attention to the first $n-1$ equalities,
$Ax_{-n}=b_{-n}$.

Therefore it suffices to prove the rank assumption \eqref{eq:rank}.
For this, we can equivalently prove that the first $n-1$ columns
of the oriented incidence matrix $D$ of the graph have rank $n-1$. Let
$D'$ denote these first $n-1$ columns, and let $E$ denote the edge set
of the graph. Note that, for each $(i,n) \in E$, there is a
corresponding row of $D'$ with a single $1$ or $-1$ in the $i$th
component.  Suppose that $D'v=0$; then immediately we have $v_i=0$
for any $i$ such that $(i,n) \in E$. But this implies that $v_j=0$ for all
$j$ such that $(j,i)\in E$ and $(i,n)\in E$, and repeating this argument,
we eventually conclude that
$v_i=0$ for all $i=1,\ldots n-1$, because the graph is connected.
We have shown that $\nul(D')=\{0\}$, and so $\rank(D')=n-1$, as
desired.
\end{proof}

The message of Lemma \ref{lem:laplin} is that, for a fully connected
graph and the Laplacian linear system \eqref{eq:llin}, we can solve
this system by instead solving a smaller system \eqref{eq:alin},
formed by removing (say) the last row and column of the Laplacian
matrix.  The latter system \eqref{eq:alin} can be solved efficiently
because it is sparse and nonsingular (e.g., using a sparse Cholesky
decomposition).  Of course, for the
linear system $Lx=b$ with $L$ a generic graph Laplacian, we apply
Lemma \ref{lem:laplin} to each subsystem $L_j x_j = b_j$, $j=1,\ldots
r$, after decomposing $L$ according to its $r$ connected components,
as in \eqref{eq:ldecomp}.

\subsubsection{Tracking graph connectivity across iterations}

We finish describing the specialized implementation for fused lasso
problems.  As explained earlier, the dual path algorithm repeatedly
computes projections onto $\nul(D)$ or \smash{$\nul(D_{-\cB})$}, and
solves linear systems in the Laplacian
$L=D^T D$ or \smash{$L=D_{-\cB}^T D_{-\cB}$}, across the Steps 1 and
3(a) described in Section \ref{sec:alg1alt}.  To utilize the approaches
outlined above, each step requires finding the
connected components of the graph $G$ or $G_{-\cB}$.  Across
successive iterations, these graphs are highly related---from one
iteration to the next, $G_{-\cB}$ only changes by the addition or
deletion of one edge (since $D_{-\cB}$
only changes by the addition or deletion of one row).  Therefore we
can easily check whether adding or deleting such an edge $e$
has changed the connectivity of the graph, by running a
breadth-first search (or depth-first search) from one of the nodes
incident to $e$.  Incorporating this idea into the path following
strategy finalizes our specialized implementation for the fused lasso,
which we summarize below.

\begin{itemize}
\item {\it Step 1.}  Compute the connected components of the
  graph $G$ (corresponding to the oriented incidence matrix
  $D$).  Compute $v=(I-P_{\nul(D)})y$ by centering $y$
  over each connected component.  Solve the Laplacian linear system
  $D^T D z = v$ by decomposing into linear subsystems over each
  connected component, and applying Lemma \ref{lem:laplin} to each
  subsystem. Set $\hu=Dz$.

\item {\it Step 3(a).}  Find the connected components of $G_{-\cB}$ by
  running breadth-first (or depth-first) search, starting at a node
  that is incident to the edge added or deleted at the last iteration.
  Compute the projections \smash{$v=(I-P_{\nul(D_{-\cB})}) y$}
  and \smash{$w = (I-P_{\nul(D_{-\cB})}) D_\cB^T s$} by centering $y$
  and \smash{$D_\cB^T s$} over each connected component.  Solve the
  Laplacian linear systems \smash{$D_{-\cB}^T D_{-\cB} z = v$} and
  \smash{$D_{-\cB}^T D_{-\cB} x = w$} by decomposing into smaller
  subsystems over each connected component, and then applying Lemma
  \ref{lem:laplin}.  Set \smash{$\hat{a} = D_{-\cB} z$} and
  \smash{$\hat{b} = D_{-\cB} x$}.
\end{itemize}

For each Laplacian linear subsystem encountered (given by
decomposing the Laplacian linear systems at each step across
connected components), the {\tt genlasso} R package uses a
sparse Cholesky decomposition on the reduced system \eqref{eq:alin},
as prescribed by Lemma \ref{lem:laplin}.  In particular, it employs a
sparse Cholesky algorithm of \citet{sparsechol} (see also the
references therein).  Unfortunately, this sparse Cholesky approach
does admit a tight bound on the compexity of solving
\eqref{eq:alin}, but empirically it is quite efficient, and the number
of operations scales linearly in the number of edges in the
subgraph (provided that this exceeds the number of nodes).  This means
that the complexity of solving a full Laplacian linear system is
approximately linear in the number of edges in the graph, and so, each
iteration of the dual path algorithm requires approximately
$O(\max\{r,n\})$ operations, where $r=m-|\cB|$ is the number of edges
in $G_{-\cB}$, and $n$ is the number of nodes.

\subsection{Extension to sparse fused lasso problems}
\label{sec:lapsfl}

The specialized fused lasso implementation of the last section can be
extended to cover the sparse fused lasso problem, where the
penalty matrix $D$ is now the oriented incidence matrix of a graph
$D^{(G)}$ with a constant multiple of the identity appended to its
rows, i.e.,
\begin{equation*}
D = \left[\begin{array}{c} D^{(G)} \\ \alpha I \end{array}\right],
\end{equation*}
so that
\begin{equation*}
\|D\beta\|_1 = \sum_{(i,j) \in E} |\beta_i-\beta_j| + \alpha
\|\beta\|_1,
\end{equation*}
for some edge set $E$ and fixed constant $\alpha > 0$.  For
brevity, we state without proof here results on the appropriate null
space projections and linear systems.
First, projecting onto $\nul(D)$ is trivial, because $\nul(D)=\{0\}$
(due to the fact that $\alpha>0$).  Consider projection onto
$\nul(D_{-\cB})$.  If there are $m$ edges in the underlying graph $G$,
then $D$ is $(m+n) \times n$, with its first $m$ rows corresponding to
the edges, and its last $n$ rows corresponding to the nodes.
As in \citet{genlasso}, we can partition the
boundary set $\cB$ accordingly, writing $\cB=\cB_1 \cup (m+\cB_2)$,
where $\cB_1 \subseteq \{1,\ldots m\}$ and $\cB_2 \subseteq
\{1,\ldots n\}$.   Furthermore, we can think of $D_{-\cB}$ as
corresponding to a subgraph $G_{-\cB}$ of $G$, defined by restricting
both of its edge and node sets, as follows:
\begin{itemize}
\item we first delete all edges of $G$ that correspond to $\cB_1$,
  yielding $G_{-\cB_1}$;
\item we then delete all nodes of $G_{-\cB_1}$ that are in $\{1,\ldots
  n\} \setminus \cB_2$, and all of their connected nodes, yielding
  $G_{-\cB}$.
\end{itemize}
The projection operator onto $\nul(D_{-\cB})$
assigns a zero to each coordinate that does not correspond to
a node in $G_{-\cB}$, and otherwise performs averaging within each of
the connected components. More formally, $(P_{\nul(D_{-\cB})} x)_i =
0$ if $i$ is not a node of $G_{-\cB}$, and otherwise
\begin{equation*}
(P_{\nul(D)} x)_i = \frac{1}{|C_j|} \sum_{\ell \in C_j} x_\ell
\;\;\;\text{where} \;\, C_j \ni  i,
\end{equation*}
and $C_j$ is the $j$th connected component of $G_{-\cB}$.

As for solving linear systems in $D^T D$ or \smash{$D_{-\cB}^T
  D_{-\cB}$}, we note that
\begin{equation*}
D^T D = (D^{(G)})^T D^{(G)} + \alpha^2 I \;\;\;\text{and}\;\;\;
D_{-\cB}^T D_{-\cB} = (D^{(G)}_{-\cB_1})^T  D^{(G)}_{-\cB_1} +
\alpha^2 I_{-\cB_2}^T I_{-\cB_2}.
\end{equation*}
In either
case, the first term is a graph Laplacian, and the second term is
a multiple of the identity matrix with some of its diagonal entries
set to zero.  This means that $D^T D$ and \smash{$D_{-\cB}^T
  D_{-\cB}$} still decompose, as before, into sub-blocks over the
connected components of $G$ and \smash{$G_{-\cB_1}$}, respectively;
i.e., we can decompose both $D^T D$ and \smash{$D_{-\cB}^T  D_{-\cB}$}
as
\begin{equation*}
\left[\begin{array}{cccc}
L_1+I_1 & 0 & \ldots & 0 \\
0 & L_2+I_2 & \ldots & 0 \\
\vdots & & & \\
0 & 0 & \ldots & L_r+I_r
\end{array}\right],
\end{equation*}
where $L_1,\ldots L_r$ are Laplacian matrices corresponding to the
subgraphs of connected components, and $I_1,\ldots I_r$ are identity
matrices with some (possibly none, or all) diagonal elements set to
zero.  Hence, linear systems in $D^T D$ or \smash{$D_{-\cB}^T
  D_{-\cB}$} can be reduced to separate linear systems in $L_j+I_j$,
for $j=1,\ldots r$; for the $j$th system, if all diagonal elements of
$I_j$ are zero, then we use the strategy discussed in Section
\ref{sec:laplin} to solve the linear system in $L_j$, otherwise
$L_j+I_j$ is nonsingular and can be factored directly (using, e.g.,
again a sparse Cholesky decomposition).

For the sake of completeness, we recall a
result from \citet{pco}, which says that the sparse fused lasso
solution at any value of $\lambda$ can be computed by
solving the corresponding fused lasso problem (i.e., corresponding to
$\alpha=0$), and then soft-thresholding the output by the amount
$\alpha\lambda$.  That is, the solution path (over $\lambda$, for
fixed $\alpha$) of the sparse fused lasso problem is obtained
by just soft-thresholding the corresponding fused lasso solution
path. Given this fact, there may seem to be no reason to extend the
implementation of Section \ref{sec:lapfl} to the sparse fused lasso
setting, as we did above.  However, for a general $X$ matrix,
the simple soft-thresholding fix is no longer
applicable, and the above perspective will prove quite useful, as we
will see shortly.



\section{Specialized implementations with a general $X$}
\label{sec:specialx}

Recall that in the presence of a (full column rank) predictor matrix
$X$, we can view the dual of the generalized lasso problem as having
the same canonical form as the dual in the signal approximator
case, but with $\ty=XX^+y$ and \smash{$\tD=DX^+$} in place of
$y$ and $D$.  The usual dual path algorithm can then be
simply run on \smash{$\ty,\tD$}, as in Algorithm \ref{alg:dualpath2}.
This is a fine strategy when $D$ is a generic penalty
matrix (Section \ref{sec:generald}).  But when $D$ is structured,
and leads to fast solutions of the appropriate linear system over
iterations of the dual path
algorithm with $X=I$ (Sections \ref{sec:trendfilter} and
\ref{sec:fusedlasso}), this structure is not in general
retained by \smash{$\tD=DX^+$}, and so
``blindly'' applying the usual path algorithm to
\smash{$\tD$} can result in a large drop in relative efficiency.

In this section, we present an
approach for carefully constructing solutions to the relevant least
squares problems, when running Algorithm \ref{alg:dualpath1} on
\smash{$\ty,\tD$} in place of $y,D$.  At a high level, our approach
solves a linear system in \smash{$\tD \in \R^{m\times n}$} using three
steps:
\begin{enumerate}
\item compute $H \in \R^{p\times q}$, whose columns are a basis for
$\nul(D)$;
\item solve a linear system in $XH \in \R^{n\times q}$;
\item solve a linear system in $D \in \R^{m\times p}$.
\end{enumerate}
In future iterations, the same strategy applies to solving linear
systems in \smash{$\tD_{-\cB} \in \R^{r \times n}$}: we repeat
the above three steps, but with $D_{-\cB}$ playing the role of $D$.
An important feature of our approach is that the matrix $XH$ in the
second step is $n \times q$, where $q$ is typically small at points of
interest along the path---we will give a more detailed explanation
shortly, but the main idea is that, at such points, linear systems
in $XH$ can be solved much more efficiently than a full linear system
in $X$ (the computational equivalent of calculating $X^+$).
Altogether, if a basis for $\nul(D)$ or
$\nul(D_{-\cB})$ is known explicitly (or can be computed easily), and
linear systems in $D$ or $D_{-\cB}$ can be solved quickly, then the
procedure outlined above can be considerably
more efficient than solving abitrary, dense linear systems in
\smash{$\tD$} or \smash{$\tD_{-\cB}$} directly.  This is the case for
both trend filtering and fused lasso problems.

Below we describe the three step procedure in detail, proving its
correctness in the context of a general matrix $D$ (and general
$X$).  After this, we discuss implementation specifics for trend
filtering and the fused lasso.

\subsection{Alternate form of computations in Algorithm
  \ref{alg:dualpath2}}

For arbitrary matrices $D,X$ (with $X$ having full column rank),
consider the problem of computing the minimum $\ell_2$ norm solution
$x^*$ of the linear system
\begin{equation}
\label{eq:dx+}
(DX^+)(DX^+)^T x = (DX^+)^T c.
\end{equation}
Our next lemma says that $x^*$ can also be characterized
as the minimum $\ell_2$ norm solution of
\begin{equation}
\label{eq:dxt}
DD^T x = D X^T d,
\end{equation}
for a suitably chosen vector $d$.

\begin{lemma}
\label{lem:dxt}
For any matrices $D,X$ (with the same number of columns) such that $X$
has full column rank, the minimum $\ell_2$ norm solution $x^*$ of
\eqref{eq:dx+} is given by the minimum $\ell_2$ norm solution of
\eqref{eq:dxt}, where $d = (I- P_{X\nul(D)}) c$.
\end{lemma}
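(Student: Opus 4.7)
The plan is to mimic Lemma~\ref{lem:dtd} applied to the effective penalty matrix $\tD = DX^+$, and then translate the resulting normal equations into a system that involves $D$ alone. Applying Lemma~\ref{lem:dtd} directly to $\tD$ with input $c$ gives $x^* = \tD z^* = DX^+ z^*$, where $z^*$ is any solution of $\tD^T \tD z = P_{\row(\tD)} c$. The idea is to ``pull $X^+$ through'' by setting $y^* = X^+ z^*$: since $X$ has full column rank, $X^T (X^+)^T = (X^+ X)^T = I_p$, so left-multiplying the normal equation by $X^T$ yields $D^T D\, y^* = X^T P_{\row(\tD)} c$, while $x^* = DX^+ z^* = D y^*$ is preserved.

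The crux of the argument is computing $X^T P_{\row(\tD)}$. I would establish the orthogonal direct sum decomposition $\col(X) = \row(\tD) \oplus X\nul(D)$. Orthogonality is immediate: every $u \in \row(\tD)$ has the form $u = X(X^T X)^{-1} D^T r$, every $v \in X\nul(D)$ has the form $v = Xw$ with $w \in \nul(D)$, and $u^T v = r^T D w = 0$. Dimensions add correctly, since $\dim \row(\tD) = \rank(\tD) = \rank(D)$ (the range of $\tD$ coincides with that of $D$) and $\dim X\nul(D) = \nuli(D)$ (as $X$ is injective), giving $\rank(D) + \nuli(D) = p = \dim \col(X)$. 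Consequently $P_{\row(\tD)} = P_{\col(X)} - P_{X\nul(D)}$, and using $X^T P_{\col(X)} = X^T$ (as $X^T$ annihilates $\col(X)^\perp$) one obtains $X^T P_{\row(\tD)} = X^T(I - P_{X\nul(D)})$.

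Combining these steps, $y^*$ satisfies $D^T D y^* = X^T d$ with $d = (I - P_{X\nul(D)}) c$, and $x^* = D y^*$. I would verify that $X^T d \in \row(D)$: for any $w \in \nul(D)$, $w^T X^T d = (Xw)^T d = 0$, because $Xw \in X\nul(D)$ while $d$ lies in its orthogonal complement. A second application of Lemma~\ref{lem:dtd}, now with $X^T d$ playing the role of $c$, then says that the minimum $\ell_2$ norm solution of $DD^T x = DX^T d$ equals $Dy$ for any $y$ solving $D^T D y = X^T d$ (noting $P_{\row(D)} X^T d = X^T d$). Applied to $y^*$, this gives exactly $x^* = D y^*$, proving the lemma. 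I expect the subspace decomposition $\col(X) = \row(\tD) \oplus X\nul(D)$ to be the only substantive step; the rest reduces to routine bookkeeping once that identity is in hand.
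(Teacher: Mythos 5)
Your proof is correct. It arrives at the paper's key identity $X^T P_{\row(DX^+)} = X^T(I-P_{X\nul(D)})$ by the complementary route: you decompose $\col(X) = \row(\tD) \oplus X\nul(D)$ orthogonally (checking orthogonality and that the dimensions add to $p$), whereas the paper decomposes the orthogonal complement, $\nul(DX^+) = \nul(X^T) \oplus X\nul(D)$, and uses $P_{\nul(DX^+)} = P_{\nul(X^T)} + P_{X\nul(D)}$; these are the same splitting of $\R^n$ viewed from opposite sides, so that step is equivalent in substance. The surrounding organization does differ: the paper characterizes $x^* = ((DX^+)^T)^+ c$ as the unique solution of a constrained linear system and multiplies that system by $X^T$ directly, while you invoke Lemma \ref{lem:dtd} twice --- once for $\tD$ to produce a $\tD^T\tD$ system, and once for $D$ to convert the resulting $D^TD$ system back into the $DD^T$ form of \eqref{eq:dxt} --- with the change of variables $y^* = X^+ z^*$ and the identity $X^T(X^+)^T = I$ doing the work in between. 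Your version buys a clean reuse of the earlier lemma and makes explicit the consistency check $X^T d \in \row(D)$ (needed so that $P_{\row(D)}X^Td = X^Td$ when Lemma \ref{lem:dtd} is applied the second time), which the paper leaves implicit; the paper's version avoids introducing the intermediate normal equations. Both arguments are sound.
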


\begin{proof}
Note that $x^* = ((DX^+)^T)^+ c$.  In general, the point $x^* = A^+ c$
can be characterized as the unique solution of the linear system $Ax =
P_{\col(X)} b$ such that $x \in \row(A)$.  (Taking $P_{\col(X)} b$,
instead of simply $b$, as the right-hand side in the linear system
here is important---the system will not be solvable if $b \notin
\col(A)$.)  Applying this logic to $A = (DX^+)^T$, we see that $x^*$
is the unique solution of
\begin{equation*}
(X^+)^T D^T x = P_{\col((DX^+)^T)} c \;\;\;\text{subject to}\;\,
x \in \row((DX^+)^T).
\end{equation*}
We have $\row((DX^+)^T) =
\col(DX^+) = \col(D)$, the last equality following since $X$ has full
column rank.  Letting {$c' = P_{\col((DX^+)^T)} c$}, the above can be
rewritten as
\begin{equation*}
(X^+)^T D^T x = c' \;\;\;\text{subject to}\;\,
x \in \col(D),
\end{equation*}
i.e., multiplying both sides by $X^T$,
\begin{equation*}
D^T x = X^T c' \;\;\;\text{subject to}\;\,
x \in \col(D),
\end{equation*}
where we again used the fact that $X$ has full column rank.  The
solution of the constrained linear system above is $x^\star = (D^T)^+
X^T c'$; in other words, we see that $x^\star$ is the minimum
$\ell_2$ norm solution of
\begin{equation*}
DD^T x = D X^T c'.
\end{equation*}
Finally, we examine {$X^T c' = X^T P_{\col((DX^+)^T)} c =
X^T P_{\row(DX^+)} c = X^T (I-P_{\nul(DX+)}) c$}.  The null space of
$DX^+$ decomposes as
\begin{align*}
\nul(DX^+)
&= \nul(X^T) + \{ z \in \col(X) : DX^+ z = 0 \} \\
&= \nul(X^T) + X\nul(D).
\end{align*}
Furthermore, the two subspaces in this decomposition are
orthogonal, so $P_{\nul(DX^+)} = P_{\nul(X^T)} + P_{X\nul(D)}$, and in
particular, {$X^T (I-P_{\nul(DX^+)}) c = X^T (I-P_{X\nul(D)}) c$},
completing the proof.
\end{proof}

If $H$ is a matrix whose columns span $\nul(D)$, then note that
projection onto $X\nul(D)$ is given by solving a least squares problem
in $XH$, namely,
\smash{$P_{X\nul(D)} c =  XH (H^T X^T X H)^{-1} H^T X^T c$.} Now,
using Lemma \ref{lem:dxt}, we can rewrite the least squares
computations in Steps 1 and 3(a) of Algorithm \ref{alg:dualpath1}
applied to \smash{$\ty,\tD$} (i.e., as would be done through Algorithm
\ref{alg:dualpath2}).
\begin{itemize}
\item {\it Step 1.}  Compute a basis $H$ for $\nul(D)$, and compute
\begin{equation}
\label{eq:xh1}
v = X^T (I-P_{X\nul(D)}) \ty = X^T y -
X^T XH (H^T X^T X H)^{-1} H^T X^T y.
\end{equation}
(Here we used the simplification $X^T \ty = X^T y$.)  Then
compute $\hu$ by solving for the minimum $\ell_2$ norm solution of the
linear system
\begin{equation}
\label{eq:easy1}
DD^T u = Dv.
\end{equation}

\item {\it Step 3(a).}  Compute a basis $H$ for
$\nul(D_{-\cB})$, and compute
\begin{align}
\label{eq:xh3a1}
v &= X^T (I-P_{X\nul(D_{-\cB})}) \ty = X^T y -
X^T XH (H^T X^T X H)^{-1} H^T X^T y, \\
\label{eq:xh3a2}
w &= X^T (I-P_{X\nul(D_{-\cB})}) \tD_\cB^T s = D_\cB^T s -
X^T XH (H^T X^T X H)^{-1} H^T D_\cB^T s.
\end{align}
(Again we used that $X^T \ty = X^T y$, and also
\smash{$X^T \tD_\cB^T s = D_\cB^T s$}.) Then compute $\hat{a}$ and
\smash{$\hat{b}$} by solving for the minimum $\ell_2$ norm solutions
of the systems
\begin{equation}
\label{eq:easy3a}
D_{-\cB}D_{-\cB}^T a = D_{-\cB} v
\;\;\;\text{and}\;\;\;
D_{-\cB}D_{-\cB}^T b = D_{-\cB} w,
\end{equation}
respectively.
\end{itemize}
For a general penalty matrix $D$, the above formulation does not
offer any advantage over applying Algorithm \ref{alg:dualpath1} to
\smash{$\ty,\tD$} directly.  But it does offer significant
advantages if the matrix $D$ is such that a basis for
$\nul(D)$ and $\nul(D_{-\cB})$ can be computed quickly, and also,
minimum $\ell_2$ norm solutions of  linear systems in $DD^T$ and
$D_{-\cB} D_{-\cB}^T$ can be computed efficiently.  In this case, we
have reduced the (generically) hard linear systems in \smash{$\tD
  \tD^T$} and \smash{$\tD_{-\cB} \tD_{-\cB}^T$} to easier ones in
$DD^T$ and $D_{-\cB}D_{-\cB}^T$, as in \eqref{eq:easy1} and
\eqref{eq:easy3a}.  Additionally, as we remarked
previously, the above steps do not require explicit computations
involving $X^+$.  Instead, the null
projections in each step require solving linear systems in
$(XH)^T XH$, as in \eqref{eq:xh1} and \eqref{eq:xh3a1},
\eqref{eq:xh3a2}.  The matrix $H$ has columns that span
$\nul(D)$ in the first iteration and span $\nul(D_{-\cB})$ in future
iterations, so $(XH)^T XH$ is $q\times q$, where $q=\nuli(D)$ or
$q=\nuli(D_{-\cB})$.  This means that $q \ll p$ at the beginning of the
path, with $q$ either increasing by one or decreasing by one at each
iteration, and only ever reaching $q=p$ when $\cB=\emptyset$ at the
end of the path. In fact, such a quantity $q$ serves as an
unbiased estimate of the degrees of freedom of the generalized lasso
estimate along the path \citep{genlasso}.  Therefore, when regularized
estimates are of interest, our focus is on the early stages of path with
$q \ll p$, in which case solving a linear system in the $q\times q$
matrix $(XH)^T XH$ is far more efficient than solving a linear system
in the $p\times p$ matrix $X^T X$ (which is what is needed in order to
apply $X^+$).


\subsection{Trend filtering, general $X$}

Following the alternate form of Steps 1 and 3(a) in the
last section, note that we really only need to describe the
construction of the basis matrix $H$ used in \eqref{eq:xh1} and
\eqref{eq:xh3a1}, \eqref{eq:xh3a2}, as the linear systems in
\eqref{eq:easy1} and \eqref{eq:easy3a} can then be solved by using the
sparse QR strategy outlined in Section \ref{sec:trendfilter}, for
trend filtering in the case $X=I$.

Let $D=D^{(k+1)} \in \R^{(p-k-1) \times p}$, the $(k+1)$st order
discrete difference operator
defined in \eqref{eq:d1}, \eqref{eq:dk+1}.  First we describe
$\nul(D)$.  Define $v_0 = (1,\ldots 1) \in \R^p$, and define
$v_j \in \R^p$, $j=1,2,3,\ldots$ by taking repeated cumulative
sums, as in
\begin{equation*}
(v_j)_i = \sum_{\ell=1}^{i} (v_{j-1})_\ell, \;\;\; i=1,\ldots p.
\end{equation*}
Therefore $v_1=(1,2,3,\ldots)$, $v_2=(1,3,6,\ldots)$, etc.  From
its recursive representation in \eqref{eq:d1}, \eqref{eq:dk+1}, it is
not hard to see that $\nul(D)$ is $k+1$ dimensional and spanned by
$v_0,\ldots v_k$, i.e., we can take the basis matrix $H$ to have
columns $v_0,\ldots v_k$.

Now consider $\nul(D_{-\cB})$, for an arbitrary subset
$\cB = \{i_1,\ldots i_d\} \subseteq \{1.\ldots p-k-1\}$.  One can
check that the $\nul(D_{-\cB})$ is $k+1+d$ dimensional and
spanned by $v_0,\ldots v_{k+d}$, where
$v_0,\ldots v_k$ are defined as above, and we additionally define for
$j=1,\ldots d$,
\begin{equation*}
(v_{k+j})_i = \begin{cases}
0 & \text{if}\;\, i < i_j + k + 1 \\
(v_k)_{i-i_j-k} & \text{if}\;\, i \geq i_j+k+1
\end{cases},
\;\;\; i=1,\ldots p.
\end{equation*}
Hence we take the basis matrix $H$ to have columns $v_0,\ldots
v_{k+d}$.

\subsection{Fused lasso and sparse fused lasso, general $X$}

For the fused lasso and sparse fused lasso setups, we have already
described projection onto $\nul(D)$ and $\nul(D_{-\cB})$ in Sections
\ref{sec:lapfl} and \ref{sec:lapsfl}, respectively, from which we can
readily construct a basis and populate the columns of $H$, and then
use the Laplacian-based solvers for least squares problems in
\eqref{eq:easy1} and \eqref{eq:easy3a}, as described again in Sections
\ref{sec:lapfl} and \ref{sec:lapsfl}.

To reiterate: when $D = D^{(G)} \in \R^{m\times p}$, the oriented
incidence matrix of some graph $G$, the null space of $D_{-\cB}$ for
any set $\cB \subseteq \{1,\ldots m\}$ is spanned by $1_{C_1},\ldots
1_{C_r} \in \R^p$, the indicators of connected components
$C_1,\ldots C_r$ of the graph $G_{-\cB}$. Note $G_{-\cB}$ is the
subgraph formed by removing edges of $G$ that correspond $\cB$ (i.e.,
the graph with oriented incidence matrix $D_{-\cB}$).  Therefore, in
this case, the vectors $1_{C_1},\ldots 1_{C_r}$ give the columns of
$H$. Instead, suppose that
\begin{equation*}
D = \left[\begin{array}{c} D^{(G)} \\ \alpha I \end{array}\right],
\end{equation*}
where $\alpha > 0$ and $I$ is the $p\times p$ identity matrix.  Given
any set $\cB \subseteq \{1,\ldots m+p\}$, we can partition this as
$\cB=\cB_1 \cup (m+\cB_2)$, where $\cB_1$ contains elements in
$\{1,\ldots m\}$ and $\cB_2$ elements in $\{1,\ldots p\}$.  We can
also form a subgraph $G_{-\cB}$ by removing both some of the
edges and nodes of $\cG$: we first remove the edges in $\cB_1$,
and then in what remains, we keep only the nodes that are not
connected to a node in $\cB_2$.  Writing ${C_1},\ldots {C_r}$ for the
connected components of $G_{-\cB}$, a basis for $\nul(D_{-\cB})$ can
then be obtained by appropriately padding the indicators
$1_{C_1},\ldots 1_{C_r}$ (of nodes on $G_{-\cB}$) with zeros.  That
is, for each $j$, define
\begin{equation*}
(v_j)_i =
\begin{cases}
0 & \text{if $i$ is not a node of $G_{-\cB}$} \\
(1_{C_j})_i & \text{otherwise}
\end{cases}, \;\;\; i=1,\ldots p,
\end{equation*}
and accordingly $v_1,\ldots v_r$ give a basis for $\nul(D_{-\cB})$,
and hence the columns of $H$.




\section{Chicago crime data example}
\label{sec:crime}

In this section, we give an example of the 
dual path algorithm run on a fused lasso problem with a reasonably
large, geographically-defined underlying graph.  The data comes from  
police reports made publically available by the city of Chicago, from
2001 until the present \citep{crimedata}.  These reports contain the  
date, time, type, and reported latitude and longitude of crime
incidents in Chicago. We examined the burglaries occurring between
2005 and 2009, and spatially aggregated them within the 2010 census 
block groups.  Using the number of households
in each block group from the 2010 census, we then calculated the
number of burglaries per household over the considered time period.
We may think of each resulting proportion as a noisy measurement of
the underlying probability of burglary occuring in a randomly chosen
household within the given census block, over the 2005 to 2009 time
period.  These proportions are displayed in Figure \ref{fig:rob}.

\begin{figure}[p]
\centering
\includegraphics[width=\textwidth]{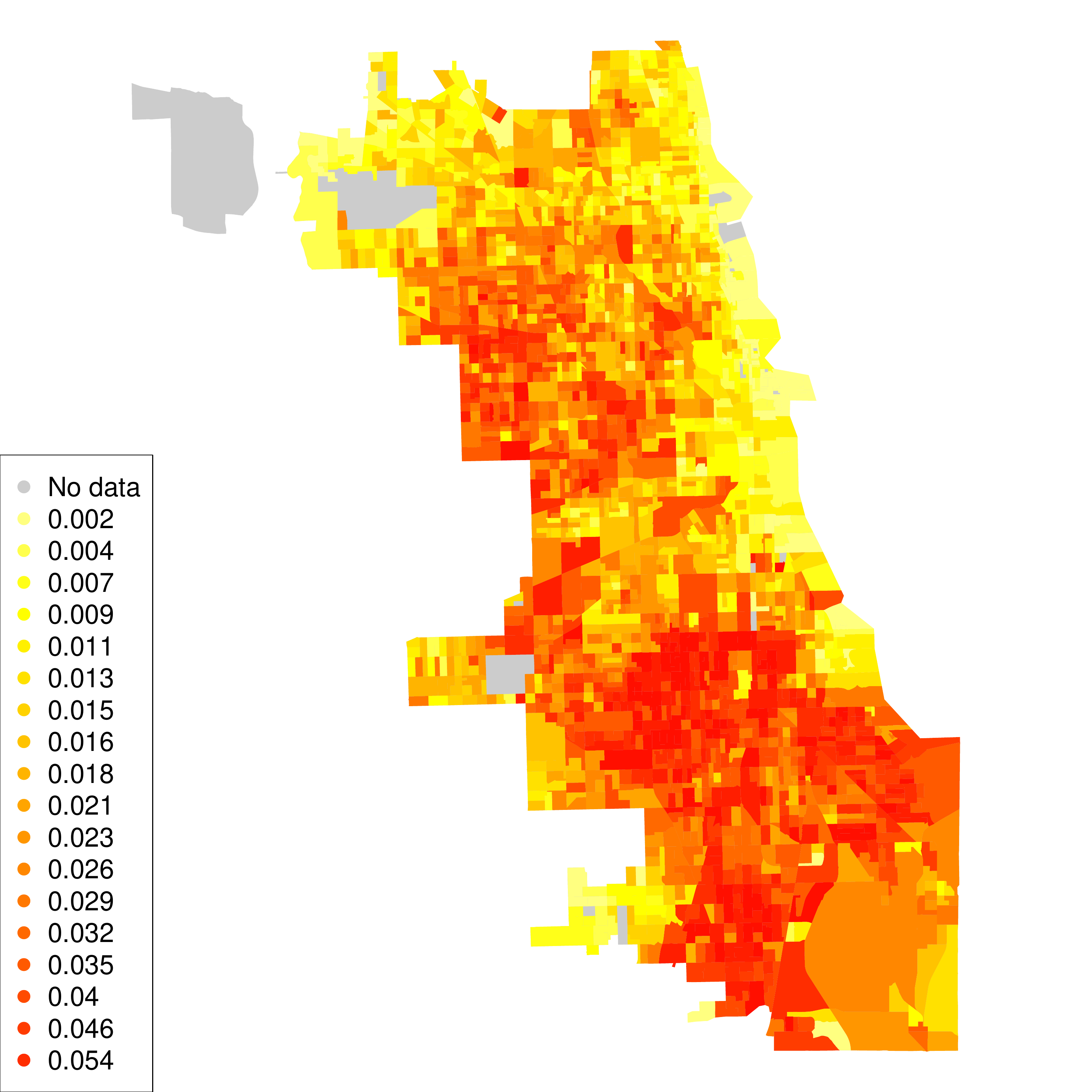}
\caption{\small\it Observed proportions of reported burglaries per
  household between 2005--2009 in Chicago, IL.  Data were aggregated
  within the 2010 census block groups.}
\label{fig:rob}
\end{figure}

\begin{figure}[p]
\centering
\includegraphics[width=\textwidth]{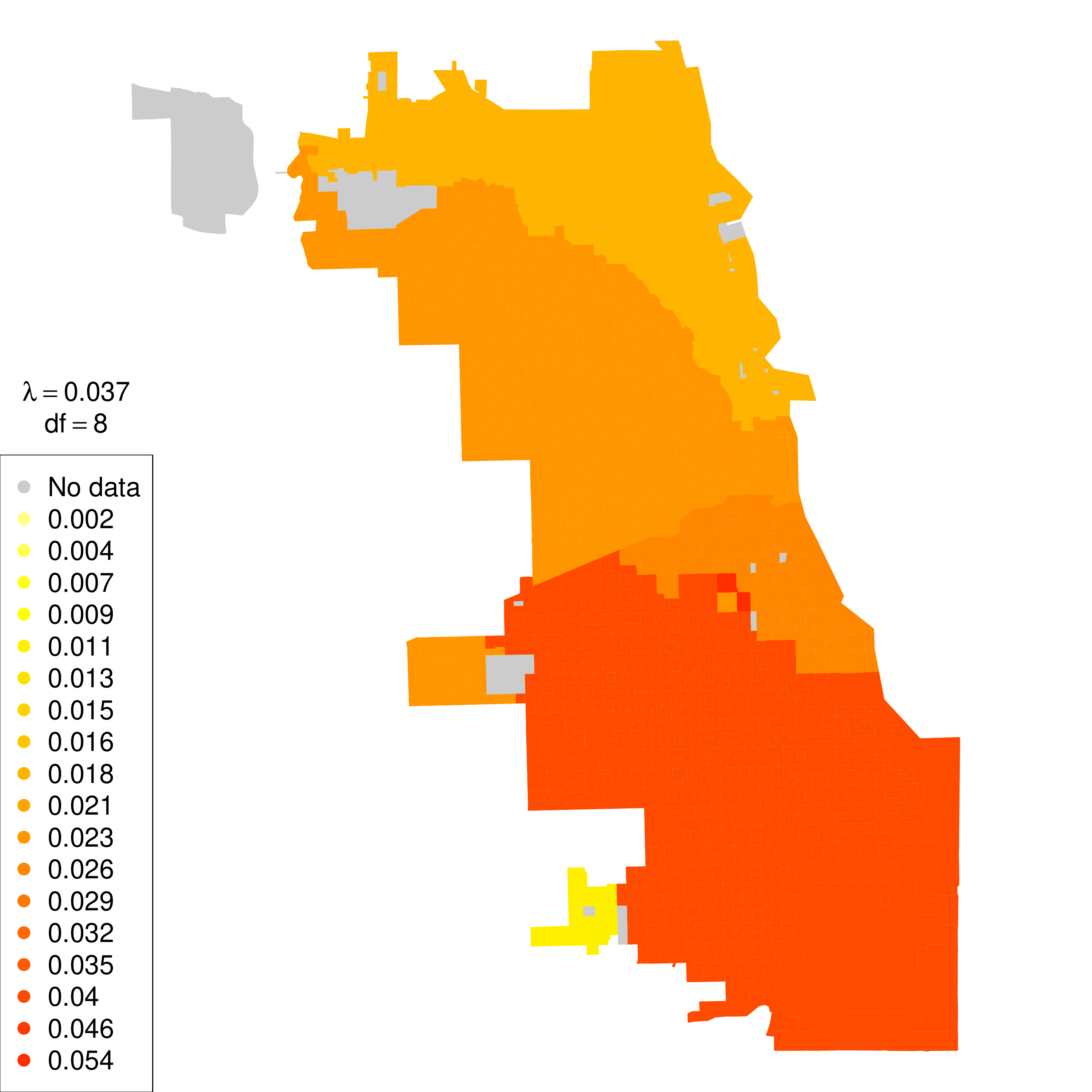}
\caption{\small\it A solution, corresponding to $\lambda = 0.037$,
  along the graph fused lasso path that was fit to the observed
  proportions of burglaries.}
\label{fig:rob10}
\end{figure}

We consider the task of estimating burglarly probabilities across
Chicago census blocks, and simultaneously grouping or clustering
these estimates across adjacent census blocks.  The fused lasso, with
an $\ell_1$ penalty on the differences between neighboring blocks,
provides a means of carrying out this task.  Using 
an $\ell_2$ or Huber penalty on the block differences would be
easier for optimization, but would not be appropriate for the goal at
hand because these smooth penalties are not capable of producing exact 
fusions in the components of the estimate.  The fused lasso setup for
the Chicago crime data used $n=2167$ blocks in total (nodes in the 
underlying graph), and $m=14,060$ connections between neighboring
blocks (edges in the graph).  Setting $X=I$, we computed the first
2500 steps of the fused lasso path, using the specialized
implementation of Section \ref{sec:fusedlasso}, which took a little
over a minute on a laptop computer.  The largest degrees of freedom
achieved by a solution in these first 2500 steps was 34.  

Figure \ref{fig:rob10} displays one particular fused lasso solution
from this path, corresponding to 8 degrees of freedom (Appendix
\ref{app:crime} displays other solutions). Note
that this solution divides the city into roughly four regions, with the
most risky region being the southern side of the city, and the least
risky being the northern side. In addition to the four main regions,
we can also see that a small region of the city with very low burglary
risk scores is isolated in the lower left part of the city; since it is
buffered by a corridor of census blocks with no data, the region
incurs only a small penalty for breaking off from the main graph. 
This picture in Figure \ref{fig:rob10} offers a better
qualitative understanding of large scale spatial patterns than do the
raw data in Figure \ref{fig:rob}.  It also provides a high level
clustering of census blocks which could be useful for police
dispatchers, city planners, politicians, and insurance companies.


Lastly, we remark that one benefit of the fused lasso over many
competing graph 
clustering methods is its local adaptivity. Simply put, the algorithm
will adaptively determine the size of a cluster given the nodewise 
measurements, counter to the tendency of other methods in
creating roughly equal sized clusters. 

\section{Empirical timings}
\label{sec:timings}

Table \ref{tab:runtimes} presented the theoretical per-iteration
complexities of the various specialized
implementations of the generalized lasso path algorithm.   Here we
briefly explore the empirical scaling of our implementations.
For many classes of generalized lasso problems, as the problem size
$n$ grows, the number of iterations taken by the
path algorithm before termination can increase super-linearly in $n$.
(A notable exception is the 1d fused lasso problem with $X=I$, in
which the number of iterations before termination is always $n-1$.)
For large problems, therefore, solving the entire path becomes
computationally infeasible, and also often undesirable (typically,
applications call for the more regularized solutions visited toward
the start of the  path).  Hence, we investigate the
time required to compute the first $100$ iterations of the path
algorithm; continuing further down the path should scale accordingly
with the number of steps.

\begin{figure}[htb]
\centering
\includegraphics[width=\textwidth]{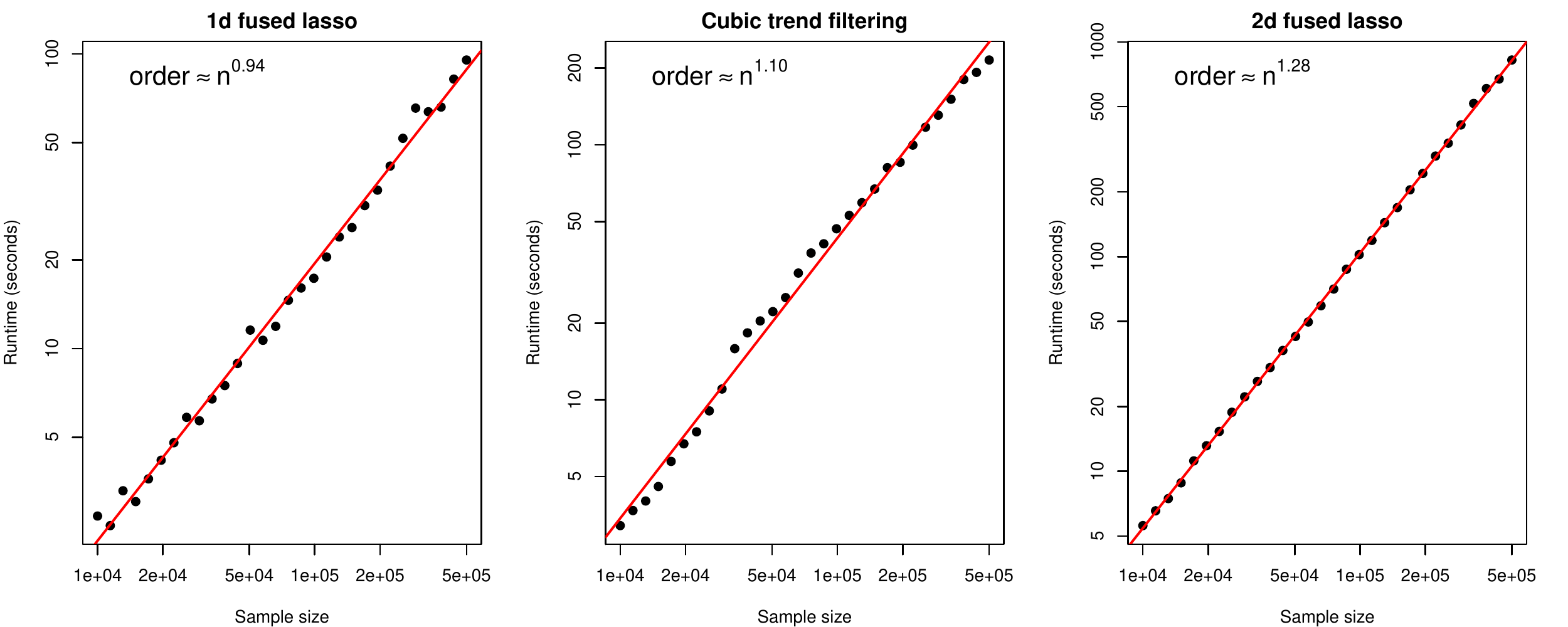}
\caption{\small\it Runtimes from computing the first 100
  steps of the generalized lasso path for 30 problem sizes ranging
  from $n=1000$ to $n=50,000$.  The left panel shows the results for
  1d fused lasso problems, the middle shows cubic trend filtering
  problems, and the right shows 2d fused lasso problems.  Each plot is
  on a log-log scale, and a least squares line (passing through 0) was
  fit to determine the empirical scaling of each implementation with
  $n$.}
\label{fig:time}
\end{figure}

The runtimes for the first $100$ path steps, with the
sample size $n$ varying from $1000$ to $50,000$, are presented
in Figure \ref{fig:time}. These were timed on a laptop computer.  We
considered three problem
classes, all with $X=I$: the 1d fused lasso, cubic trend filtering,
and the 2d fused lasso problem classes.  For the first two settings,
we generated noisy observations around a mean following a two-period
sinusoidal function.  For the 2d fused lasso setting, we generated
noisy observations over an approximately square grid, around a mean
that was elevated in the bottom third quadrant of the grid.  Note
that the empirical complexity of the first 100 steps, in both the 1d
fused lasso and trend filtering settings, is approximately linear, as
predicted by the theoretical analysis.  The steps in the 2d fused
lasso computation scales slightly slower, but still not far from
linear.


Across all three settings, our empirically derived scalings indicate
that 100 path steps can be computed for problem sizes into the
millions within a relatively short (i.e., less than one hour) time
period.  This bodes
well for the 1d fused lasso and trend filtering problems, because
in these cases, hundreds or thousands of path steps can often deliver
regularized solutions of interest, even in very large problem sizes.
However, for the 2d fused lasso problem, it is more often the case
that many, many steps are needed to deliver solutions of
interest.  This has to do the connectivity of the graph
corresponding to $D_{-\cB}$, with $\cB$ being the boundary set---see
Section \ref{sec:lapfl}, or \citet{genlasso}.  We have found that the
number of steps needed for interesting solutions scales
more favorably when running the fused lasso on a graph determined by
geographic regions (e.g., census block groups, as in
Sections \ref{sec:crime}), but the number of steps
grows prohibitively large for grid graphs, especially in a setting
like image denoising, where the desired solutions often display a
large number of connected components and hence require many steps. 

Finally, we note that these runtimes were calculated using a
default version of R (specifically, R version 3.1).   As our
specialized implementations all use built-in R matrix functions in
one way or another, compiling R against a commercial matrix
library will likely improve these results drastically on multicore
machines.

\section{Discussion}
\label{sec:discuss}

We have developed efficient implementations of the generalized lasso
dual path algorithm of \citet{genlasso}.  In particular, we derived an
implementation for
a general penalty matrix $D$, one for trend filtering problems, in
which $D$ is the discrete difference operator of a given order, and
one for fused lasso problems, in which $D$ is the oriented incidence
matrix of some underlying graph.  Each implementation can handle the
signal approximator case, $X=I$, as well as a general predictor matrix
$X$.  These implementations are all put to use in the {\tt genlasso} R
package.


\section*{Acknowledgements}

RT was supported by NSF Grant DMS-1309174.

\appendix

\section{The QR decomposition and least squares problems}
\label{app:qrls}

Here we give a brief review of the QR decomposition, and the application of
this decomposition to least squares problems. Chapter 5 of \citet{gvl}
is an excellent reference.

\subsection{The QR decomposition of a full column rank matrix}
\label{app:qrfull}

Let $A \in \R^{m \times n}$ with $\rank(A)=n$ (this implies that $m \geq n$).
Then there exists matrices $Q \in \R^{m\times m}$ and $R\in \R^{m\times n}$
such that $A=QR$, where $Q$ is orthogonal (its first $n$ columns
form a basis for the column space of $A$), and $R$ is of the form
\begin{equation*}
R = \left[\begin{array}{c} R_1 \\ 0 \end{array}\right],
\end{equation*}
$R_1 \in \R^{n\times n}$ being upper triangular.
This is (not surprisingly) called the {\it QR decomposition}, and it
can be computed in $O(mn^2)$ operations \citep{gvl}.

The decomposition $A=QR$ is used primarily for solving least squares problems.
For example, given $b \in \R^m$, suppose that are interested in finding
$x \in \R^n$ to minimize
\begin{equation}
\label{eq:ls}
\|b - Ax\|_2^2.
\end{equation}
Since $\rank(A)=n$, the minimizer $x$---also referred to as the solution---is
unique. Let $Q_1 \in \R^{m\times n}$ denote the first $n$ columns of $Q$,
and let $Q_2 \in \R^{m\times(m-n)}$ denote the last $m-n$ columns. Then
\begin{equation*}
\|b-Ax\|_2^2 = \|Q^T(b-Ax)\|_2^2 = \|Q_1^T b - R_1x\|_2^2 + \|Q_2^T b\|_2^2,
\end{equation*}
and so minimizing the left-hand side is equivalent to minimizing
$\|Q_1^T b - R_1x\|_2^2$. This can be done quickly, by solving the equation
$R_1x = c$ where $c=Q_1^T b$. Recalling the triangular structure of $R_1$,
this looks like:
\begin{equation*}
\sw\sh
\left[\begin{array}{ccccc}
\Box & \Box & \Box & \Box & \Box \\
& \Box & \Box & \Box & \Box \\
& & \Box & \Box & \Box \\
& & & \Box & \Box \\
& & & & \Box
\end{array}\right] \cdot x = c,
\end{equation*}
where the boxes denote nonzero entries, and blank spaces indicate zero
entries. We first solve the equation given by last row (an equation in one
variable), then we substitute and solve the second to last row, etc. This
{\it back-solve} procedure takes $O(n^2)$ operations. Hence, finding the least
squares solution of \eqref{eq:ls} requires $O(mn)+O(n^2)=O(mn)$ operations
in total (the first term counts the multiplication by $Q_1^T$ to form $c=Q_1^T b$).
Note that this does not count the $O(mn^2)$ operations required to compute the
QR decomposition of $A$ in the first place; and importantly, if we want to
minimize multiple criterions of the form \eqref{eq:ls} for different vectors
$b$, then we only compute the QR decomposition of $A$ once, and use this
decomposition to find each solution quickly in $O(mn)$ operations.

\subsection{The QR decomposition of a column rank deficient matrix}
\label{app:qrdef}

Let $A \in \R^{m\times n}$ with $\rank(A)=k \leq n$. Then there exists
$P \in \R^{n\times n}$, $Q \in \R^{m\times m}$, and
$R \in \R^{m \times n}$ such that $AP=QR$, where $P$ is a permutation matrix,
$Q$ is orthogonal (its first $k$ columns span the column space of $A$),
and $R$ decomposes as
\begin{equation*}
R=\left[\begin{array}{cc}
R_1 & R_2 \\
0 & 0
\end{array}\right],
\end{equation*}
where $R_1 \in \R^{k\times k}$ is upper triangular, and
$R_2 \in \R^{k \times (n-k)}$ is dense. Visually, $R$
looks like this (when the order of rank deficiency is $n-k=2$):
\begin{equation*}
\sw\sh
\left[\begin{array}{ccccc}
\Box & \Box & \Box & \Box & \Box \\
& \Box & \Box & \Box & \Box \\
& & \Box & \Box & \Box \\
& & & \wBox & \wBox \\
& & & & \wBox
\end{array}\right].
\end{equation*}
Note that $AP$ just permutes the columns of $A$.
This decomposition takes $O(mnk)$ operations \citep{gvl}.

The least squares criterion in \eqref{eq:ls} can now admit many
solutions $x$ (in fact, infinitely many) if $\rank(A)<n$.
If we simply want any solution $x$---\citet{gvl} refer to this as a
{\it basic solution}---then we can use the QR decomposition $AP=QR$.
We write
\begin{align*}
\|b-Ax\|_2^2 &= \|b-APP^T x\|_2^2 \\
&= \|Q^T(b-APP^T x)\|_2^2 \\
&=\|Q_1^T b - \big[ R_1 \;\, R_2 \big]
P^T x\|_2^2 + \|Q_2^T b\|_2^2,
\end{align*}
where $Q_1\in\R^{m\times k}$ contains the first $k$ columns of $Q$,
and $Q_2\in\R^{m\times (m-k)}$ contains the last $m-k$ columns.
We can now consider $z=P^T x$ as the optimization variable, and solve
\begin{equation}
\label{eq:req}
\Big[\begin{array}{cc}
R_1 & R_2
\end{array}\Big]
\left[\begin{array}{c}
z_1 \\ z_2
\end{array}\right]
= Q_1^T b,
\end{equation}
where we have decomposed $z=(z_1,z_2)$ with $z_1 \in \R^{k}$ and
$z_2 \in \R^{n-k}$. Note that to solve
\eqref{eq:req}, we can take $z_2=0$, and then back-solve to compute $z_1$ in
$O(k^2)$ operations. Letting $x=Pz$, we have hence computed a basic least
squares solution in $O(mk)+O(k^2)+O(n)=O(mn)$ operations.

\subsection{The minimum $\ell_2$ norm least squares solution}
\label{app:mls}

Suppose again that $A\in\R^{m\times n}$ and $\rank(A)=k \leq n$. If we
want to compute the unique solution\footnote{Uniqueness follows from the fact that
the set of least squares solutions forms a convex set. Note that this is given
by $x^*=A^+b$, where $A^+$ is the Moore-Penrose pseudoinverse of $A$.}
$x^*$ that has the minimum $\ell_2$ norm across all least squares solutions
$x$ in \eqref{eq:ls}, then the strategy given in the last section does not
necessarily work (in fact, it does not produce $x^*$ unless $R_2=0$).
However, we can modify the QR decomposition $AP=QR$ from Section
\ref{app:qrdef} in order to compute $x^*$. For this, we need to apply
Givens rotations to $R$.
These are covered in the next section, but for now, the key message is
that there exists an orthogonal transformation $G \in \R^{n\times n}$ such that
\begin{equation}
\label{eq:tilr}
RG = \tR = \left[\begin{array}{cc}
0 & \tR_1 \\
0 & 0
\end{array}\right],
\end{equation}
where \smash{$\tR_1 \in \R^{k\times k}$} is upper triangular. Applying
a single Givens rotation to (the columns of) $R$ takes $O(k)$
operations, and $G$ is composed
of $k(n-k)$ of them, so forming $RG$ takes $O(k^2(n-k))$
operations. Hence the decomposition \smash{$APG = Q\tR$} requires the
same order of complexity, $O(mnk)+O(k^2(n-k))=O(mnk)$ operations in
total.

Now we write
\begin{equation*}
\|b-Ax\|_2^2 = \|b-APGz\|_2^2,
\end{equation*}
where $z=G^T P^T x$. Since $P,G$ are orthogonal, we have
$\|x\|_2=\|z\|_2$,
and therefore our problem is equivalent to finding the
minimum $\ell_2$ norm minimizer $z^*$ of the right-hand side above.
As before, we now utilize the QR decomposition, writing
\begin{equation*}
\|b-APGz\|_2^2 = \|Q^T(b-APGz)\|_2^2
= \|Q_1^T b - \big[0 \;\, \tR_1\big] z\|_2^2 + \|Q_2^T b\|_2^2,
\end{equation*}
where $Q_1\in\R^{m\times k}$ and $Q_2\in\R^{m\times(m-k)}$ give,
respectively, the first $k$ and the last $m-k$ columns of $Q$.
Hence we seek the minimum $\ell_2$ norm solution of
\begin{equation*}
\Big[\begin{array}{cc}
0 & \tR_1
\end{array}\Big]
\left[\begin{array}{c}
z_1 \\ z_2
\end{array}\right]
= Q_1^T b,
\end{equation*}
where $z=(z_1,z_2)$ with $z_1 \in \R^{(n-k)}$ and $z_2 \in \R^k$.
For $z^*$ to have minimum $\ell_2$ norm, we must have $z^*_1=0$. Then
$z^*_2$ is given by back-solving, which takes $O(k^2)$ operations.
Finally, we let $x^*=PGz^*$, and count $O(mk)+O(k^2)+O(n^2)+O(n)
= O(n\cdot\max\{m,n\})$ operations in total to compute the minimum
$\ell_2$ norm least squares solution.

\subsection{The minimum $\ell_2$ norm least squares solution and
the transposed QR}
\label{app:mlst}

Given $A\in\R^{m\times n}$ with $\rank(A)=k \leq n$, it can be advantageous
in some problems to use a QR decomposition of $A^T$ instead of $A$.
(For example, this is the case when we want to update the QR decomposition
after $A$ has changed by one column; see
Section \ref{app:updefcol}.) By what we just showed,
we can compute a decomposition \smash{$A^T PG = Q\tR$}, where
$P \in \R^{m\times m}$ is a permutation matrix, $G \in \R^{m\times m}$
is an orthogonal matrix of Givens rotations, $Q \in \R^{n\times n}$ is
orthogonal, and \smash{$\tR \in \R^{n\times m}$} is of the special
form \eqref{eq:tilr} with \smash{$\tR_1 \in \R^{k\times k}$} upper
triangular, in $O(mnk)$ operations.

To find the minimum $\ell_2$ norm minimizer $x^*$ of \eqref{eq:ls},
we can employ a similar strategy to that of Section \ref{app:mls}.
Using the orthogonality of $P,G$, and the computed decomposition
\smash{$G^T P^T A = \tR^T Q^T$}, we have
\begin{equation*}
\|b-Ax\|_2^2 = \|G^T P^T(b-Ax)\|_2^2 =
\|c_1 - \big[\tR_1^T \;\, 0\big] z\|_2^2 + \|c_2\|_2^2,
\end{equation*}
where $z=Q^T x$, and $c_1,c_2$ denote the first $k$, respectively
last $m-k$ coordinates of $c=G^T P^T b$.
As $Q$ is orthogonal, we have $\|x\|_2=\|z\|_2$, and
hence it suffices to find the minimum $\ell_2$ norm minimizer $z^*$ of
the right-hand side above. This is the same as finding the minimum
$\ell_2$ norm solution of the linear equation
\begin{equation*}
\Big[\begin{array}{cc}
\tR_1^T & 0
\end{array}\Big]
\left[\begin{array}{c}
z_1 \\ z_2
\end{array}\right]
=  c_2,
\end{equation*}
where $z=(z_1,z_2)$ with $z_1 \in \R^k$ and $z_2 \in \R^{(n-k)}$.
Therefore $z_2^*=0$, and $z_1^*$ can be computed by foward-solving (the
same concept as back-solving, except we start with the first row), requiring
$O(k^2)$ operations. We finally take $x^*=Qz^*$. The total number
of operations is $O(n)+O(m^2)+O(k^2)+O(nk)=O(m\cdot\max\{m,n\})$.

\section{Givens rotations}
\label{app:givens}

We describe Givens rotations, orthogonal transformations that help
maintain (or create) maintain upper triangular structure. Givens
rotations provide a way to efficiently update the QR decomposition
of a given matrix after a row or column has been added or deleted. (They
also provide a way to compute the QR decomposition in the first place.)
Our explanation and notation here are based largely on Chapter 5
of \citet{gvl}.

\subsection{Simple Givens rotations in two dimensions}
\label{app:givens2d}

The main idea behind a Givens rotation can be expressed by considering
the $2 \times 2$ rotation matrix
\begin{equation*}
G =
\left[\begin{array}{cc}
c & s \\ -s & c
\end{array}\right],
\end{equation*}
where $c=\cos\theta$ and $s=\sin\theta$, for some $\theta\in[0,2\pi]$.
Multiplication by $G^T$ amounts to a counterclockwise rotation through
an angle $\theta$; since it is a rotation matrix, $G$ is clearly orthogonal.
Furthermore, given any vector $(a,b) \in \R^2$, we can choose $c,s$
(choose $\theta$) such that
\begin{equation*}
G^T
\left[\begin{array}{c}
a \\ b
\end{array}\right] =
\left[\begin{array}{c}
d \\ 0
\end{array}\right],
\end{equation*}
for some $d\in\R$.
This is simply rotating $(a,b)$ onto the first coordinate axis, and
by inspection we see that we must take $c=a/\sqrt{a^2+b^2}$ and
$s=-b/\sqrt{a^2+b^2}$. Note that, from the point of view of computational
efficiency, we never have to compute $\theta$ (which would require
inverse trigonometric functions).

\subsection{Givens rotations in higher dimensions}
\label{app:givenshd}

The same idea extends naturally to higher dimensions. Consider the
$n\times n$ Givens rotation matrix
\begin{equation*}
G = \left[\begin{array}{cccccccc}
1 & 0 & \ldots & 0 & \ldots & 0 & \ldots & 0 \\
0 & 1 & \ldots & 0 & \ldots & 0 & \ldots & 0 \\
\vdots & & & & & & & \\
0 & 0 & \ldots & c & \ldots & s & \ldots & 0 \\
\vdots & & & & & & & \\
0 & 0 & \ldots & -s & \ldots & c & \ldots & 0 \\
\vdots & & & & & & & \\
0 & 0 & \ldots & 0 & \ldots & 0 & \ldots & 1 \\
\end{array}\right];
\end{equation*}
in other words, $G$ is the $n\times n$ identity matrix, except with
four elements $G_{ii},G_{ij},G_{ji},G_{jj}$ replaced with the
corresponding elements of the $2\times 2$ Givens rotation matrix.
We will write $G=G(i,j)$ to emphasize the dependence on $i,j$.
It is straightforward to check that $G$ is orthogonal. Applying $G^T$ to
a vector $x\in\R^n$ only affects components $x_i$ and $x_j$, and leaves all
other components untouched: with $z=G^T x$, we have
\begin{equation*}
z_k = \begin{cases}
cx_i - sx_j & \text{if}\;\, k=i \\
sx_i + cx_j & \text{if}\;\, k=j \\
x_k & \text{otherwise}
\end{cases}.
\end{equation*}
Because $G^T$ only acts on two components, we can compute
$z=G^T x$ in $O(1)$ operations. And as in the $2\times 2$ case, we can make
$z_j=0$ by taking
\begin{equation*}
c=x_i/\sqrt{x_i^2+x_j^2}\;\;\; \text{and} \;\;\;
s=-x_j/\sqrt{x_i^2+x_j^2}.
\end{equation*}

Now we consider Givens rotations applied to matrices. If $A\in\R^{m\times n}$
and $G=G(i,j)\in\R^{m\times m}$, then pre-multiplying $A$ by $G^T$ (as in $G^T A$)
only affects rows $i$ and $j$, and hence computing $G^T A$ takes $O(n)$
operations. Moreover, with the appropriate choice of $c,s$, we can selectively
zero out an element in the $j$th row of $G^T A$. A common application of $G^T$
looks like the following:
\begin{equation*}
\sw\sh
G^T \cdot
\left[\begin{array}{cccc}
\Box & \Box & \Box & \Box \\
& \Box & \Box & \Box \\
& & \Box & \Box \\
& & \Box & \Box \\
& & & \Box
\end{array}\right] =
\left[\begin{array}{cccc}
\Box & \Box & \Box & \Box \\
& \Box & \Box & \Box \\
& & \Box & \Box \\
& & & \Box \\
& & & \Box
\end{array}\right],
\end{equation*}
where in this example $G=G(3,4)$, and $c,s$ have been chosen so that the
element in the 4th row and 3rd column of the output is zero. Importantly,
the first 2 columns of rows 3 and 4 were all zeros to begin with, and zeros
after pre-multiplication, so that this zero pattern has not been disturbed
(think of the $2\times 2$ case: rotating $(0,0)$ still gives $(0,0)$).
Applying a second Givens rotation to the output gives an upper
triangular structure:
\begin{equation*}
\sw\sh
G_2^T \cdot
\left[\begin{array}{cccc}
\Box & \Box & \Box & \Box \\
& \Box & \Box & \Box \\
& & \Box & \Box \\
& & & \Box \\
& & & \Box
\end{array}\right] =
\left[\begin{array}{cccc}
\Box & \Box & \Box & \Box \\
& \Box & \Box & \Box \\
& & \Box & \Box \\
& & & \Box \\
& & & \wBox
\end{array}\right],
\end{equation*}
where $G_2=G_2(4,5)$ is another Givens rotation
matrix.

On the other hand, post-multiplying $A\in\R^{m\times n}$ by a Givens
rotation matrix $G=G(i,j)\in\R^{n\times n}$ (as in $AG$) only affects
columns $i$ and $j$. Therefore computing $AG$ requires $O(m)$ operations.
The logic is very similar to the pre-multiplication case, and by choosing
$c,s$ appropriately, we can zero out a particular element in the $j$th
column of $AG$. A common application looks like:
\begin{equation*}
\sw\sh
\left[\begin{array}{ccccc}
\Box & \Box & \Box & \Box & \Box \\
& \Box & \Box & \Box & \Box \\
& & \Box & \Box & \Box \\
& & & & \Box
\end{array}\right] \cdot G =
\left[\begin{array}{ccccc}
\Box & \Box & \Box & \Box & \Box \\
& \Box & \Box & \Box & \Box \\
& & & \Box & \Box \\
& & & & \Box
\end{array}\right],
\end{equation*}
where $G=G(3,4)$ and $c,s$ were chosen to zero out the element in the
3rd row and 3rd column. Now applying two more Givens rotations yields
an upper triangular structure:
\begin{equation*}
\sw\sh
\left[\begin{array}{ccccc}
\Box & \Box & \Box & \Box & \Box \\
& \Box & \Box & \Box & \Box \\
& & & \Box & \Box \\
& & & & \Box
\end{array}\right] \cdot G_2 G_3 =
\left[\begin{array}{ccccc}
\wBox & \Box & \Box & \Box & \Box \\
& & \Box & \Box & \Box \\
& & & \Box & \Box \\
& & & & \Box
\end{array}\right].
\end{equation*}

\section{Updating the QR decomposition in the full rank case}
\label{app:upfull}

In this section we cover techniques based on Givens rotations for
updating the QR decomposition of a matrix $A \in \R^{m\times n}$,
after a row or column has been either added or removed to $A$.
We assume here that $\rank(A)=n$; the next section covers the rank
deficient case, which is more delicate. For the full rank update
problem, a good reference is Section 12.5 of \citet{gvl}.
Hence suppose that we have computed a decomposition $A=QR$,
with $Q\in\R^{m\times m}$ and $R\in\R^{m\times n}$, as described
in Section \ref{app:qrfull}, and
we subsequently want to compute a QR decomposition of \smash{$\tA$},
where \smash{$\tA$} differs from $A$ by either one row or one
column. As motivation, we may have already solved the least squares
problem
\begin{equation*}
\|b-Ax\|_2^2,
\end{equation*}
and now want to solve the new least squares problem
\begin{equation*}
\|c-\tA x\|_2^2.
\end{equation*}
As we will see, computing a QR decomposition of \smash{$\tA$} by
updating that of $A$ saves an order of magnitude in computational time
when compared to the naive route (computing the QR decomposition
``from scratch''). We treat the row and column update problems
separately.

\subsection{Adding or removing a row}
\label{app:upfullrow}

Suppose that \smash{$\tA\in\R^{(m+1)\times n}$} is formed by adding a
row to $A$, following its $i$th row, so
\begin{equation*}
A = \left[\begin{array}{c}
A_1 \\ A_2
\end{array}\right]
\;\;\;\text{and}\;\;\;
\tA = \left[\begin{array}{c}
A_1 \\ w^T \\ A_2 \end{array}\right],
\end{equation*}
where $A_1 \in \R^{i\times n}$, $A_2\in\R^{(m-i)\times n}$, and $w\in\R^n$
is the row to be added.
Let $Q_1\in\R^{i\times m}$ denote the first $i$ rows of $Q$ and $Q_2 \in
\R^{(m-i)\times m}$ denote its last $m-i$ rows. By rearranging both the rows
of $A$ the rows of $Q$ in the same way, the product $Q^T A$ remains the same:
\begin{equation*}
\left[\begin{array}{cc} Q_2^T & Q_1^T \end{array}\right]
\left[\begin{array}{c} A_2 \\ A_1 \end{array}\right] =
Q^T A = R
= \left[\begin{array}{c} R_1 \\ 0 \end{array}\right],
\end{equation*}
where $R_1\in \R^{n\times n}$ is upper triangular. Therefore
\begin{equation*}
\left[\begin{array}{ccc}
1 & 0 & 0 \\
0 & Q_2^T & Q_1^T
\end{array}\right]
\left[\begin{array}{c}
w^T \\ A_2 \\ A_1
\end{array}\right] =
\left[\begin{array}{c}
w^T \\ R_1 \\ 0
\end{array}\right].
\end{equation*}
We can now apply Givens rotations $G_1,\ldots G_n$ so that
\begin{equation*}
\tR = G_n^T \ldots G_1^T
\left[\begin{array}{c}
w^T \\ R_1 \\ 0
\end{array}\right]
= \left[\begin{array}{c}
\tR_1 \\ 0
\end{array}\right],
\end{equation*}
where \smash{$\tR_1\in\R^{n\times n}$} is upper triangular. Hence
defining
\begin{equation*}
\tQ = \left[\begin{array}{cc}
0 & Q_1 \\
1 & 0 \\
0 & Q_2
\end{array}\right]
G_1\ldots G_n,
\end{equation*}
and noting that \smash{$\tQ\in\R^{(m+1)\times(m+1)}$} is still
orthogonal, we have \smash{$\tA=\tQ\tR$}, the desired QR
decomposition. This update
procedure uses $n$ Givens rotations, and therefore it requires a total
of $O(mn)$ operations. Compare this to the usual cost $O(mn^2)$ of
computing a QR decomposition (without updating).

On the other hand, suppose that \smash{$\tA \in \R^{(m-1)\times n}$}
is formed by removing the $i$th row of $A$. Hence
\begin{equation*}
A = \left[\begin{array}{c}
A_1 \\ w^T \\ A_2 \end{array}\right]
\;\;\;\text{and}\;\;\;
\tA = \left[\begin{array}{c}
A_1 \\ A_2
\end{array}\right],
\end{equation*}
where $A_1 \in \R^{(i-1)\times n}$, $A_2\in\R^{(m-i)\times n}$, and $w\in\R^n$
is the row to be deleted.  (We assume without a loss of generality
that $m>n$, so that removing a row does
not change the rank; updates in the rank deficient case are covered in
the next section.)
Let $q\in\R^m$ denote the $i$th row of $Q$, and note that we can
compute Givens rotations $G_1,\ldots G_{m-1}$ such that
\begin{equation*}
G_{m-1}^T \ldots G_1^T q = se_1,
\end{equation*}
with $e_1=(1,0,\ldots 0) \in\R^m$ the first standard basis vector,
and $s=\pm 1$. Let \smash{$\tQ=Q G_1\ldots G_{m-1}$}; then, as
\smash{$\tQ$} is still orthogonal, it has the form
\begin{equation*}
\tQ = \left[\begin{array}{cc}
0 & \tQ_1 \\
s & 0 \\
0 & \tQ_2
\end{array}\right],
\end{equation*}
where \smash{$\tQ_1\in\R^{(i-1)\times (m-1)}$} and
\smash{$\tQ_2\in\R^{(m-i)\times (m-1)}$}. Furthermore, defining
\smash{$\tR G_{m-1}^T \ldots G_1^T R$}, we can see that
\begin{equation*}
\tR =
\left[\begin{array}{c}
v^T \\ \tR_1 \\ 0
\end{array}\right],
\end{equation*}
where \smash{$\tR_1 \in \R^{n\times n}$} is upper triangular and
$v\in\R^n$. By construction \smash{$A = QR = \tQ\tR$}, and defining
\begin{equation*}
\tQ_0 =
\left[\begin{array}{c}
\tQ_1 \\ \tQ_2
\end{array}\right]
\;\;\;\text{and}\;\;\;
\tR_0 = \left[\begin{array}{c} \tR_1 \\ 0 \end{array}\right],
\end{equation*}
we have \smash{$\tA=\tQ_0\tR_0$}, as desired. We performed $m-1$
Givens rotations, and hence $O(m^2)$ operations.

\subsection{Adding or removing a column}
\label{app:upfullcol}

Suppose that \smash{$\tA\in\R^{m\times(n+1)}$} is formed by adding a
column to $A$, say, after its $j$th column. Then
\begin{equation*}
Q^T \tA = \left[\begin{array}{ccc}
R_1 & \vdots & R_2 \\
0 & w & R_3 \vspace{-5pt} \\
0 & \vdots & 0
\end{array}\right],
\end{equation*}
where $R_1\in\R^{j\times j}$ and $R_3\in\R^{(n-j)\times(n-j)}$
are upper triangular, $R_2\in\R^{j\times(n-j)}$ is dense, and
$w\in\R^m$.
(We are assuming here, without a loss of generality, that the added column
does not lie in the span of the existing ones; updates in the rank
deficient case are covered in the next section.)
We can apply Givens rotations $G_1,\ldots G_{n-j}$ to the rows of
\smash{$Q^T \tA$} so that
\begin{equation*}
G_{n-j}^T \ldots G_1^T Q^T \tA =
\left[\begin{array}{c} \tR_1 \\ 0 \end{array}\right] = \tR,
\end{equation*}
where \smash{$\tR_1\in\R^{(n+1)\times(n+1)}$} is upper triangular.
Therefore with \smash{$\tQ=Q G_1 \ldots G_{n-j}$}, we have
\smash{$\tA=\tQ\tR$}. We applied $O(n)$ Givens rotations, so this
update procedure requires $O(mn)$ operations.

If instead \smash{$\tA \in \R^{m\times (n-1)}$} is
formed by removing the $j$th column of $A$, then
\begin{equation*}
Q^T \tA = \left[\begin{array}{cc}
R_1 & R_2 \\
0 & w^T \\
0 & R_3 \\
0 & 0
\end{array}\right],
\end{equation*}
where $R_1\in\R^{(j-1)\times(j-1)}$ and $R_3\in\R^{(n-j)\times(n-j)}$
are upper triangular, $R_2\in\R^{(j-1)\times(n-j)}$ is dense, and
$w\in\R^{n-j}$.
Note that we can apply Givens rotations $G_1,\ldots G_{n-j}$ to the
rows of \smash{$Q^T \tA$} to produce
\begin{equation*}
G_{n-j}^T \ldots G_1^T Q^T \tA =
\left[\begin{array}{cc}
\tR_1 \\ 0
\end{array}\right] = \tR,
\end{equation*}
where \smash{$\tR_1 \in \R^{(n-1)\times(n-1)}$} is upper triangular.
Hence with \smash{$\tQ=Q G_1 \ldots G_{n-j}$}, we see that
\smash{$\tA=\tQ\tR$}. Again we used $O(n)$ Givens rotations, and
$O(mn)$ operations.

\section{Updating the QR decomposition in the rank deficient case}
\label{app:updef}

Here we again consider techniques for updating a QR decomposition, but
study the more difficult case in which $A\in\R^{m\times n}$ with $\rank(A)=k\leq n$.
In particular, we are interested in computing the minimum $\ell_2$ norm minimizer
of
\begin{equation}
\label{eq:ls2}
\|b-Ax\|_2^2,
\end{equation}
and subsequently, computing the minimum $\ell_2$ norm minimizer of
\begin{equation}
\label{eq:ls3}
\|c-\tA x\|_2^2.
\end{equation}
where \smash{$\tA$} has either one more or one less row that $A$, or
else one more of one less column than $A$.
Depending on whether whether our goal is to update the rows or columns,
we actually need to use a different QR decomposition for the the initial
least squares problem \eqref{eq:ls2}.

\subsection{Adding or removing a row}
\label{app:updefrow}

We compute the minimum $\ell_2$ norm minimizer of the initial least squares
criterion \eqref{eq:ls2}
using the QR decomposition $APG = QR$ described in Section \ref{app:mls},
where $P\in\R^{n\times n}$ is a permutation matrix,
$G\in\R^{n\times n}$ is a product of Givens rotations matrices,
$Q\in\R^{m\times m}$ and $R\in\R^{m\times n}$ is of the special form
\begin{equation*}
R = \left[\begin{array}{cc} 0 & R_1 \\ 0 & 0 \end{array}\right],
\end{equation*}
with $R_1\in\R^{k\times k}$ upper triangular (we note, in order to avoid
confusion, that $R,R_1$ were written as \smash{$\tR,\tR_1$} in Section
\ref{app:mls}).

First suppose that \smash{$\tA\in\R^{(m+1)\times n}$} is formed by
adding a row to $A$, after its $i$th row. Write
\begin{equation*}
A = \left[\begin{array}{c}
A_1 \\ A_2
\end{array}\right]
\;\;\;\text{and}\;\;\;
\tA = \left[\begin{array}{c}
A_1 \\ w^T \\ A_2 \end{array}\right],
\end{equation*}
where $A_1 \in \R^{i\times n}$, $A_2\in\R^{(m-i)\times n}$, and $w\in\R^n$
is the row to be added. Also let $Q_1\in\R^{i\times m}$ and
$Q_2\in\R^{(m-i)\times m}$ denote the first $i$ and last $m-i$ rows of $Q$,
respectively. The logic at this step is similar to that in the full rank case:
we can rearrange both the rows of $A$ and the
rows of $Q$ so that the product $Q^T A$ will not change, hence
\begin{equation*}
\left[\begin{array}{cc} Q_2^T & Q_1^T \end{array}\right]
\left[\begin{array}{c} A_2 \\ A_1 \end{array}\right] PG =
Q^T A P G = R
= \left[\begin{array}{cc} 0 & R_1 \\ 0 & 0 \end{array}\right].
\end{equation*}
Therefore
\begin{equation}
\label{eq:tc}
\left[\begin{array}{ccc}
1 & 0 & 0 \\
0 & Q_2^T & Q_1^T
\end{array}\right]
\left[\begin{array}{c}
w^T \\ A_2 \\ A_1
\end{array}\right] PG =
\left[\begin{array}{c}
w^T PG \\ Q^T A P G
\end{array}\right] =
\left[\begin{array}{cc}
d_1^T & d_2^T \\ 0 & R_1 \\ 0 & 0
\end{array}\right],
\end{equation}
where $d_1\in\R^{n-k}$ and $d_2\in\R^k$ are the first $n-k$ and last
$k$ components, respectively, of $d=G^T P^T w$.
Now we must consider two cases. First, assume that
\smash{$\rank(\tA)=\rank(A)$},
so adding the new row to $A$ did not change its rank. This implies
that $d_1=0$, and we can apply Givens rotations $G_1,\ldots G_k$ to
the right-hand side of \eqref{eq:tc} so that
\begin{equation*}
\tR = G_k^T \ldots G_1^T \left[\begin{array}{cc}
0 & d_2^T \\ 0 & R_1 \\ 0 & 0
\end{array}\right] =
\left[\begin{array}{cc} 0 & \tR_1 \\ 0 & 0 \end{array}\right],
\end{equation*}
where \smash{$\tR_1\in\R^{k\times k}$} is upper triangular.
Letting
\begin{equation*}
\tQ = \left[\begin{array}{cc}
0 & Q_1 \\
1 & 0 \\
0 & Q_2
\end{array}\right]
G_1\ldots G_k,
\end{equation*}
we complete the desired decomposition
\smash{$\tA P G = \tQ\tR$}. Note
that this QR decomposition is of the appropriate form to compute
the minimum $\ell_2$ norm solution of the least squares problem
\eqref{eq:ls3}.

The second case to consider is \smash{$\rank(\tA)>\rank(A)$}, which
means that adding the new row to $A$ increased the rank. Then at least
one component of $d_1$ is nonzero, and we can apply Givens rotations
 $G_1,\ldots G_{n-k}$ to the right-hand side of \eqref{eq:tc} so that
\begin{equation*}
\tR =
\left[\begin{array}{cc}
d_1^T & d_2^T \\ 0 & R_1 \\ 0 & 0
\end{array}\right]
G_1 \ldots G_{n-k}
=
\left[\begin{array}{cc} 0 & \tR_1 \\ 0 & 0 \end{array}\right],
\end{equation*}
where $\tR_1\in\R^{(k+1)\times (k+1)}$ is upper triangular.
We let
\begin{equation*}
\tQ = \left[\begin{array}{cc}
0 & Q_1 \\
1 & 0 \\
0 & Q_2
\end{array}\right]
\;\;\;\text{and}\;\;\;
\tG = G G_1 \ldots G_{n-k},
\end{equation*}
and observe that \smash{$\tA P \tG = \tQ \tR$} is a QR decomposition
of the desired form, so that we may compute the minimum $\ell_2$
norm minimizer of \eqref{eq:ls3}. Finally,
in either case (an increase in rank or not), we used
$O(n)$ Givens rotations, and so this update procedure requires
$O(n\cdot\max\{m,n\})$ operations.

Alternatively, suppose that \smash{$\tA\in\R^{(m-1)\times n}$} is
formed by removing the $i$th row of $A$, so
\begin{equation*}
A = \left[\begin{array}{c}
A_1 \\ w^T \\ A_2 \end{array}\right]
\;\;\;\text{and}\;\;\;
\tA = \left[\begin{array}{c}
A_1 \\ A_2
\end{array}\right],
\end{equation*}
where $A_1 \in \R^{(i-1)\times n}$, $A_2\in\R^{(m-i)\times n}$, and
$w\in\R^n$ is the row to be deleted. We follow the same arguments as
in the full rank case: we let $q\in\R^m$ denote the $i$th row of $Q$,
and compute Givens rotations $G_1,\ldots G_{m-1}$ such that
\begin{equation*}
G_{m-1}^T \ldots G_1^T q = se_1,
\end{equation*}
with $e_1=(1,0,\ldots 0) \in\R^m$ and $s=\pm 1$.
Defining \smash{$\tQ=Q G_1\ldots G_{m-1}$}, we see that
\begin{equation*}
\tQ = \left[\begin{array}{cc}
0 & \tQ_1 \\
s & 0 \\
0 & \tQ_2
\end{array}\right],
\end{equation*}
for \smash{$\tQ_1\in\R^{(i-1)\times (m-1)}$} and
\smash{$\tQ_2\in\R^{(m-i)\times (m-1)}$},
and defining \smash{$\tR = G_{m-1}^T \ldots G_1^T R$}, we have
\begin{equation*}
\tR =
\left[\begin{array}{cc}
v_1^T & v_2^T \\
0 & \tR_1 \\
0 & 0
\end{array}\right],
\end{equation*}
where \smash{$\tR_1 \in \R^{k\times k}$} has zeros below its diagonal,
and  $v_1\in\R^{n-k}$, $v_2\in\R^k$. As \smash{$APG=QR=\tQ\tR$}, we
let
\begin{equation*}
\tQ_0 =
\left[\begin{array}{c}
\tQ_1 \\ \tQ_2
\end{array}\right]
\;\;\;\text{and}\;\;\;
\tR_0 = \left[\begin{array}{cc}
0 & \tR_1 \\
0 & 0
\end{array}\right],
\end{equation*}
and conclude that \smash{$\tA P G = \tQ_0\tR_0$}, which is almost the
desired QR decomposition.
We say almost because, if \smash{$\rank(\tA)<\rank(A)$} (removing the
$i$th row decreased the rank), then the diagonal of \smash{$\tR_1$}
will have a zero
element and hence it will not be upper triangular. If the $q$th diagonal
element is zero, then we can perform Givens rotations $J_1,\ldots J_{q-1}$
and $J_{q+1},\ldots J_k$ resulting in
\begin{equation*}
\bar{R}_0 = J_{q-1}^T \ldots J_1^T
\tR_0 J_{q+1} \ldots J_{k} =
\left[\begin{array}{cc}
0 & \bar{R}_1 \\
0 & 0
\end{array}\right],
\end{equation*}
where \smash{$\bar{R}_1 \in \R^{(k-1)\times(k-1)}$} is upper
triangular. It helps to see a picture: with its 2nd diagonal element
zero,  \smash{$\tR_0$} may look like
\begin{equation*}
\sw\sh
\left[\begin{array}{ccccc}
\wBox & \Box & \Box & \Box & \Box \\
& & & \Box & \Box \\
& & & \Box & \Box \\
& & & & \Box \\
& & & & \wBox
\end{array}\right],
\end{equation*}
so applying 2 Givens rotations to the rows,
\begin{equation*}
\sw\sh
J_2^T J_1^T \cdot
\left[\begin{array}{ccccc}
\wBox & \Box & \Box & \Box & \Box \\
& & & \Box & \Box \\
& & & \Box & \Box \\
& & & & \Box \\
& & & & \wBox
\end{array}\right] =
\left[\begin{array}{ccccc}
\wBox & \Box & \Box & \Box & \Box \\
& & & \Box & \Box \\
& & & & \Box \\
& & & & \wBox \\
& & & & \wBox
\end{array}\right].
\end{equation*}
and a single Givens rotation to the columns,
\begin{equation*}
\sw\sh
\left[\begin{array}{ccccc}
\wBox & \Box & \Box & \Box & \Box \\
& & & \Box & \Box \\
& & & & \Box \\
& & & & \wBox \\
& & & & \wBox
\end{array}\right] \cdot J_4 =
\left[\begin{array}{ccccc}
\wBox & \wBox & \Box & \Box & \Box \\
& & & \Box & \Box \\
& & & & \Box \\
& & & & \wBox \\
& & & & \wBox
\end{array}\right],
\end{equation*}
which has desired form.
Letting \smash{$\bar{Q}_0 = \tQ_0 J_1 \ldots J_{q-1}$} and
\smash{$\tG=G J_{q+1} \ldots J_k$}, we have constructed the proper
QR decomposition \smash{$\tA P \tG = \bar{Q}_0 \bar{R}_0$}. We used
$O(m)$ Givens rotations, and $O(m\cdot\max\{m,n\})$ operations in
total.

\subsection{Adding or removing a column}
\label{app:updefcol}

If \smash{$\tA$} has one more or less column than $A$, then one cannot
obviously update the QR decomposition $APG = QR$ of $A$ to
obtain such a decomposition for \smash{$\tA$}.
Note that, if \smash{$\tA$} differs from $A$ by one row, then
\smash{$\tA P G$} also differs from $APG$ by one row; but if
\smash{$\tA$} differs from $A$ by one column, then \smash{$\tA$} does
not even have the appropriate dimensions for
post-multiplication by $PG$.

However, because adding or a removing a column to $A$ is the same as adding or
removing a row to $A^T$, we can compute a QR decomposition $A^T PG = QR$,
where now $P \in \R^{m\times m}$, $G \in \R^{m\times m}$, $Q \in \R^{n\times n}$,
and $R \in \R^{n\times m}$, and update it using
the strategies disussed in the previous section. The update procedure for
addition requires $O(m\cdot\max\{m,n\})$ operations, and that for removal
requires $O(n\cdot\max\{m,n\})$ operations. Hence, to be clear, we first
compute the decomposition $A^T PG = QR$ in order to solve the initial least
squares problem \eqref{eq:ls2}, as described in Section \ref{app:mlst}, and then
update it to form
\smash{$\tA^T \tP\tG = \tQ\tR$} (for some \smash{$\tP,\tG,\tQ,\tR$}),
which we use to solve \eqref{eq:ls3}.

\section{More plots from the Chicago crime data example}
\label{app:crime}

The figures below display 5 more solutions along the fused lasso path
fit to the Chicago crime data example, corresponding to
2, 5, 10, 15, and 25 degrees of freedom.

\begin{figure}[p]
\includegraphics[width=\textwidth]{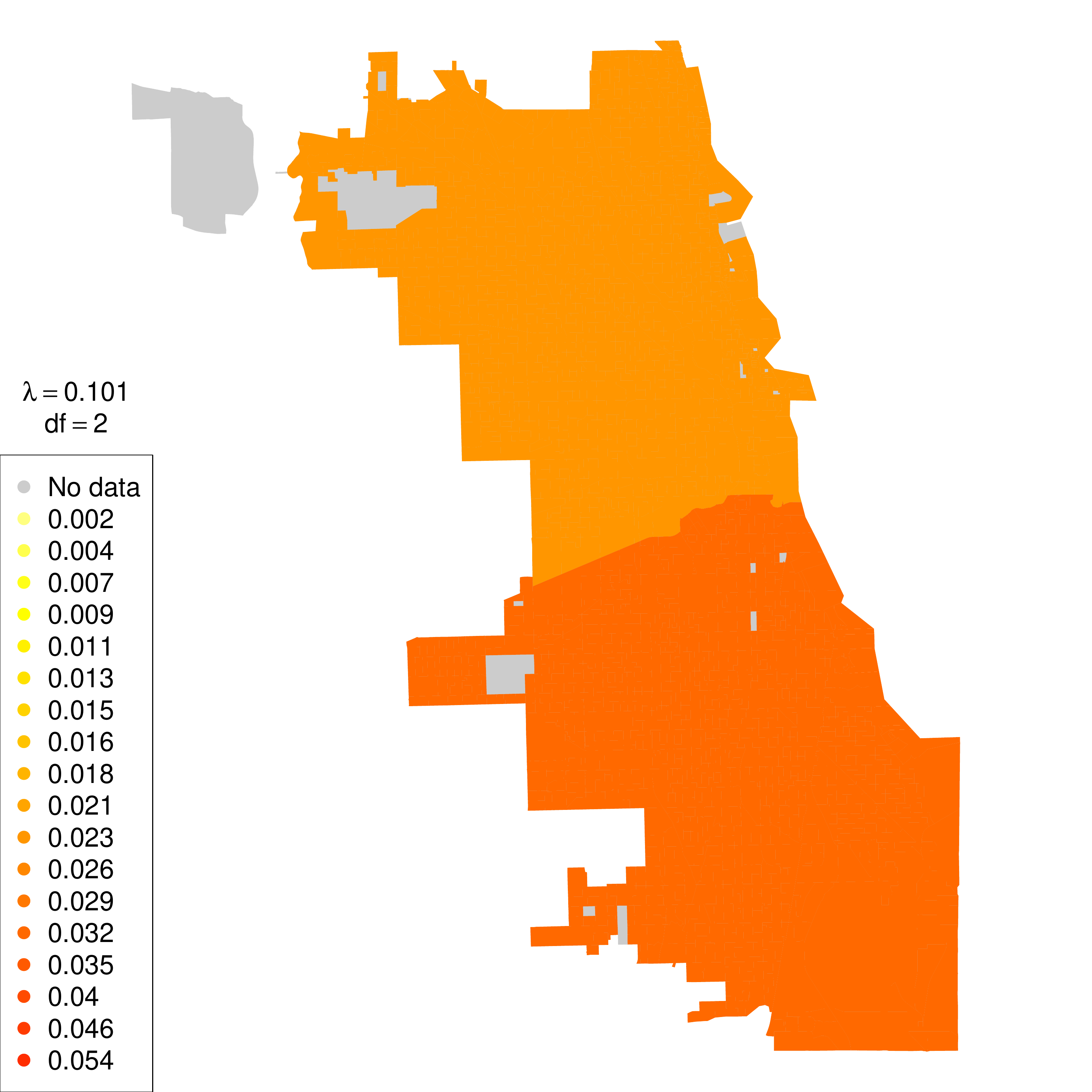}
\caption{\small\it Fused lasso solution, from the Chicago crime data 
  example, with $\lambda=0.101$.}
\end{figure}

\begin{figure}[p]
\includegraphics[width=\textwidth]{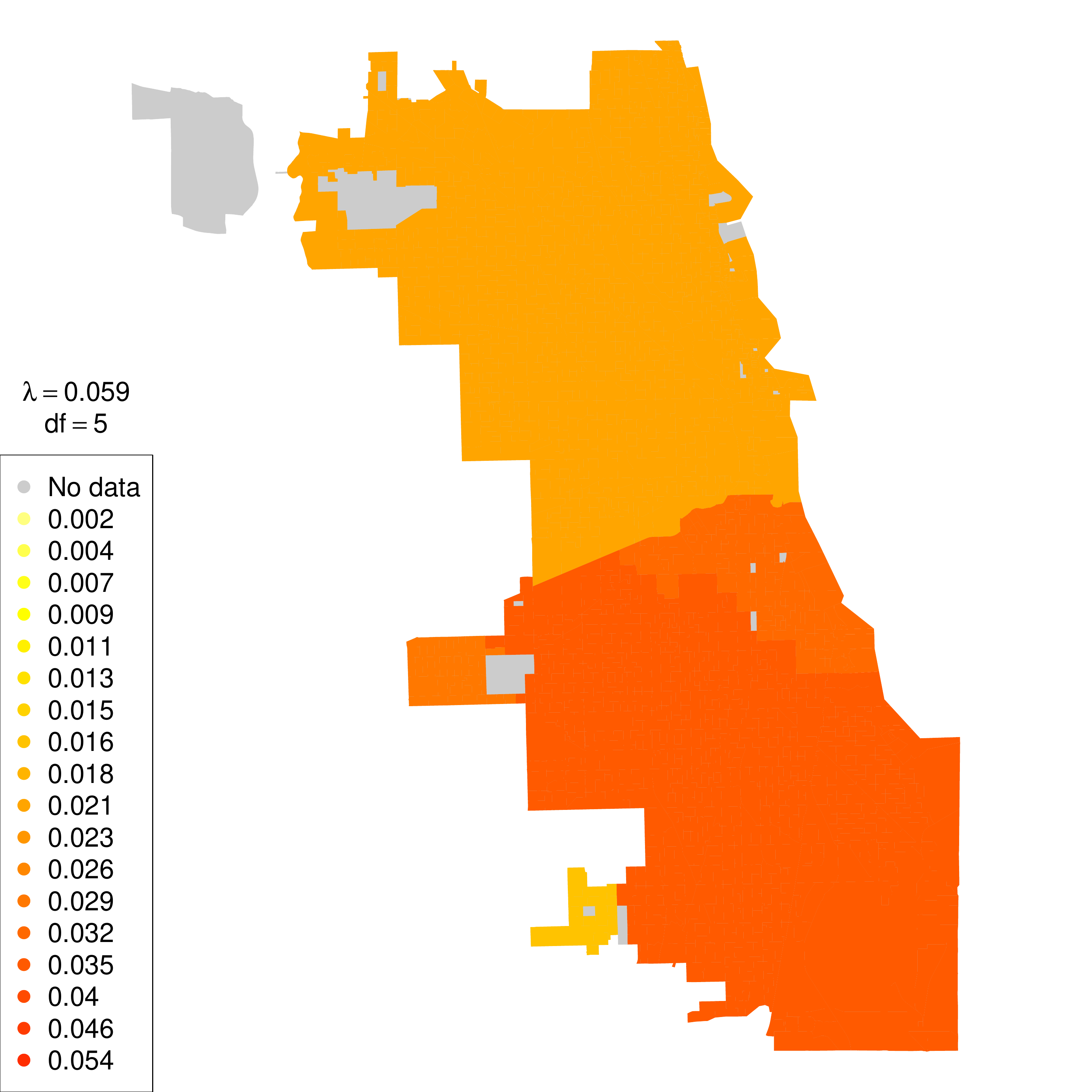}
\caption{\small\it Fused lasso solution, from the Chicago crime data 
  example, with $\lambda=0.059$.}
\end{figure}

\begin{figure}[p]
\includegraphics[width=\textwidth]{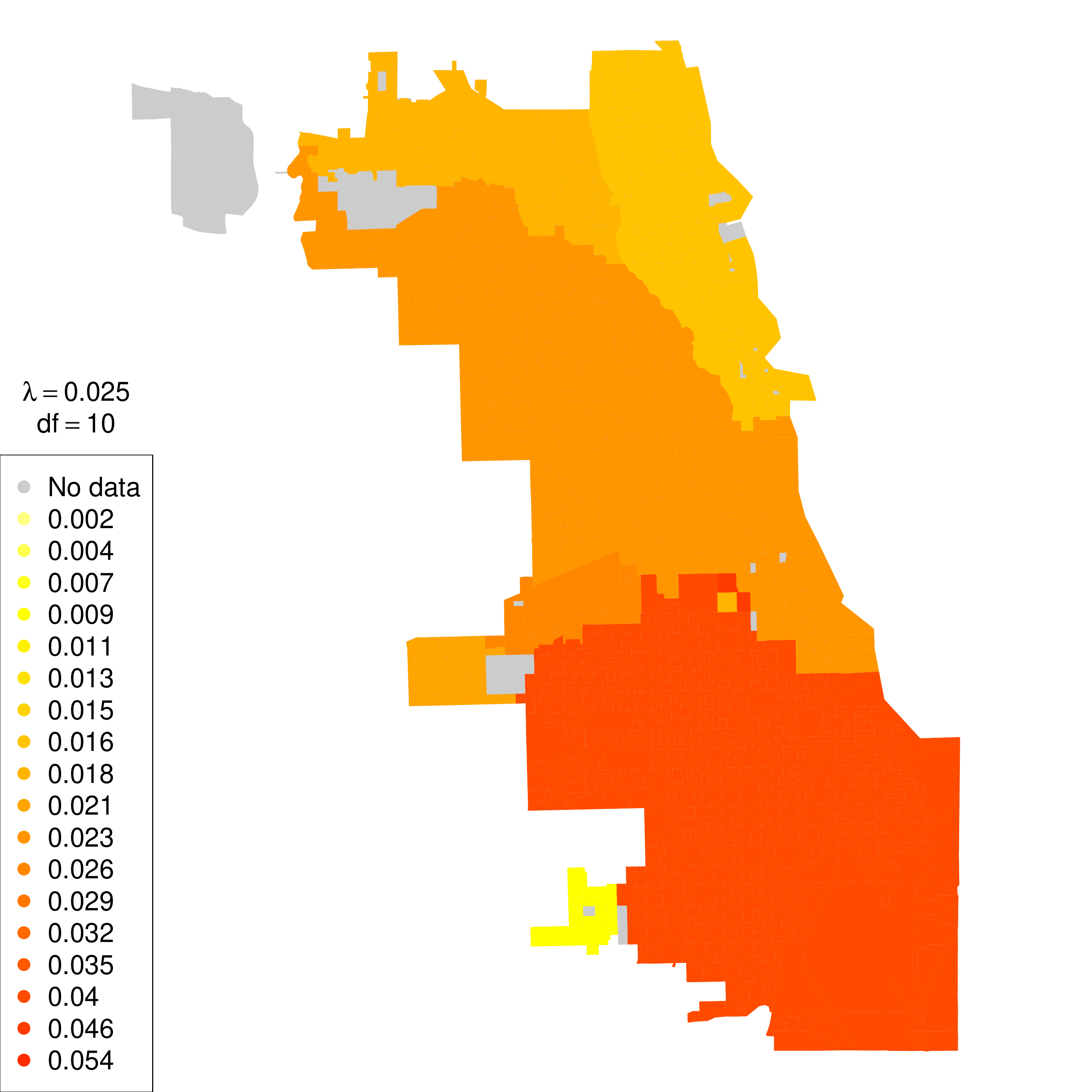}
\caption{\small\it Fused lasso solution, from the Chicago crime data 
  example, with $\lambda=0.025$.}
\end{figure}

\begin{figure}[p]
\includegraphics[width=\textwidth]{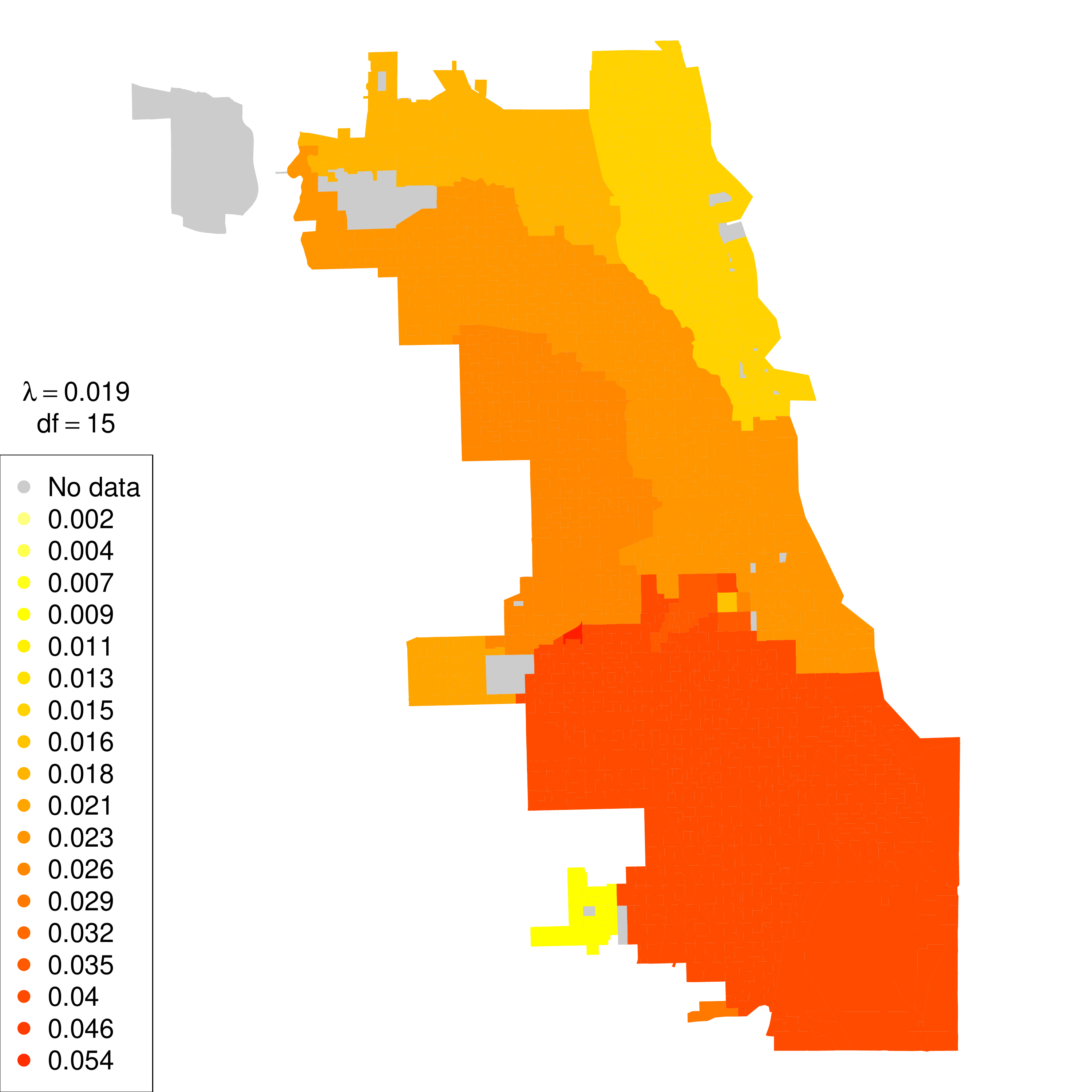}
\caption{\small\it Fused lasso solution, from the Chicago crime data 
  example, with $\lambda=0.019$.}
\end{figure}

\begin{figure}[p]
\includegraphics[width=\textwidth]{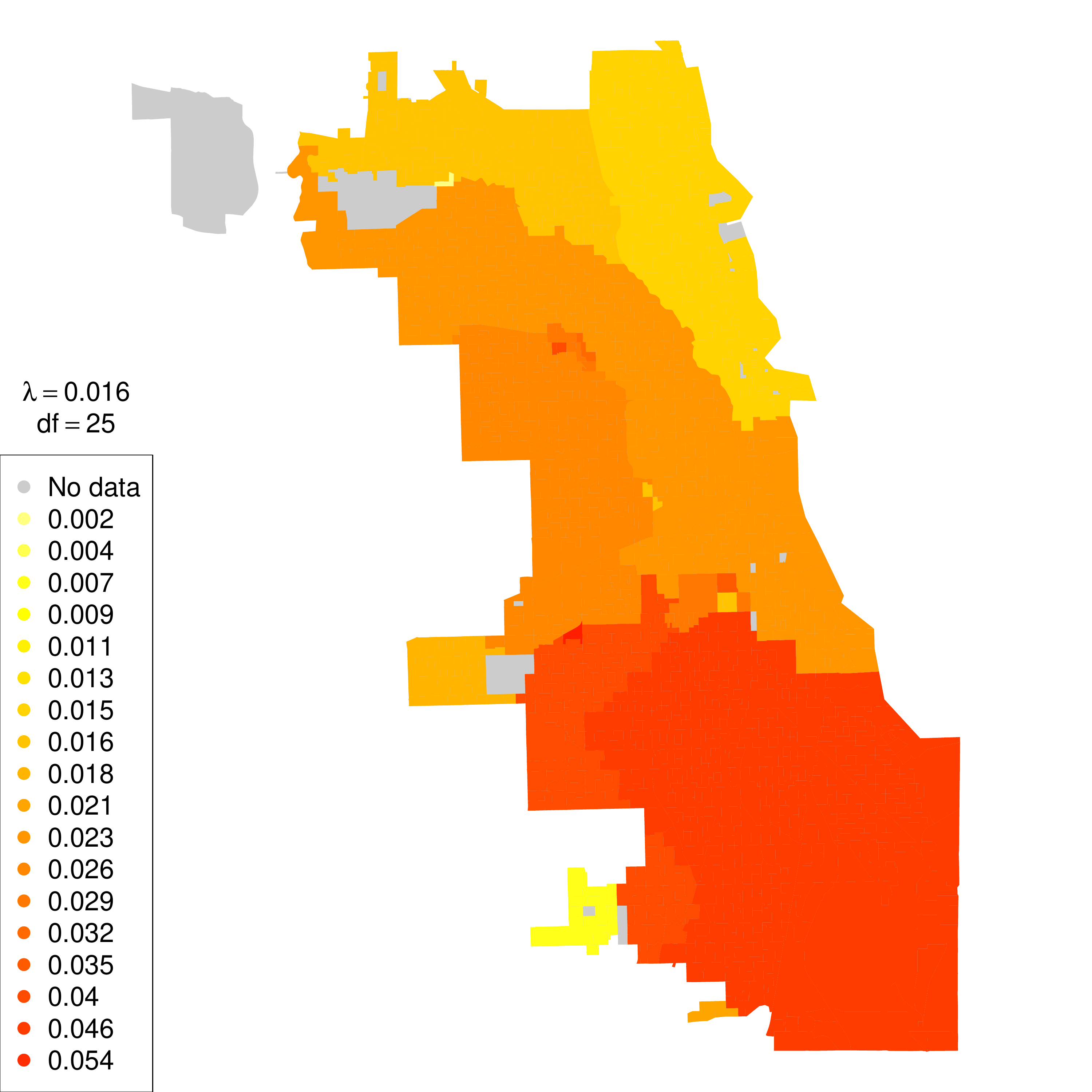}
\caption{\small\it Fused lasso solution, from the Chicago crime data 
  example, with $\lambda=0.016$.}
\end{figure}

\bibliographystyle{agsm}
\bibliography{ryantibs}

\end{document}